\def\doi{6 (3:21) 2010}
\begin{document}

\bibliographystyle{plain}

\newcommand{\SN}{\mbox{\sf SN}}
\newcommand{\SNinf}{\mbox{\sf SN}^\omega}
\newcommand{\SNinff}{\mbox{\sf SN}^\infty}
\newcommand{\nat}{\mbox{\bf N}}
\newcommand{\real}{\mbox{\bf R}_{\geq 0}}
\newcommand{\ar}{\mbox{\sf ar}}
\newcommand{\argu}{\mbox{\sf arg}}
\newcommand{\tr}{\mbox{\sf trunc}}
\newcommand{\rt}{\mbox{\sf root}}
\newcommand{\XX}{{\bf X}}
\newcommand{\desda}{\; \Longleftrightarrow \;}
\newcommand{\gehz}{\gtrsim}
\newcommand{\nt}{\mbox{\sf not}}
\newcommand{\ms}{\mbox{\sf morse}}
\newcommand{\inv}{\mbox{\sf inv}}
\newcommand{\tl}{\mbox{\sf tail}}
\newcommand{\tlp}{\mbox{\sf tail0}}
\newcommand{\hd}{\mbox{\sf head}}
\newcommand{\Obs}{\mbox{\sf Obs}}
\newcommand{\unf}{\mbox{\sf Unf}}
\newcommand{\pob}{\mbox{\sf P}}
\newcommand{\zip}{\mbox{\sf zip}}
\newcommand{\even}{\mbox{\sf even}}
\newcommand{\odd}{\mbox{\sf odd}}
\newcommand{\alt}{\mbox{\sf alt}}
\newcommand{\fib}{\mbox{\sf Fib}}
\newcommand{\kol}{\mbox{\sf Kol}}
\newcommand{\ones}{\mbox{\sf ones}}
\newcommand{\zeros}{\mbox{\sf zeros}}
\newcommand{\ovf}{\mbox{\sf overflow}}
\newcommand{\nf}{\mbox{\bf NF}}
\newcommand{\str}{D^{\omega}}
\newcommand{\TT}{{\bf T}_s}
\newcommand{\EE}{{\bf E}}
\newtheorem{example}{Example}

\title[Well-definedness of Streams by Transformation and Termination]{Well-definedness of Streams \\ by Transformation and Termination}
\author{Hans Zantema}
\address{%
Department of Computer Science, TU Eindhoven, P.O.\ Box 513,
5600 MB Eindhoven, The Netherlands, and}
\address{\vskip-6 pt\noindent
Institute for Computing and Information Sciences, Radboud University \\
Nijmegen, P.O.\ Box 9010, 6500 GL Nijmegen, The Netherlands
}
\email{H.Zantema@tue.nl}

\begin{abstract}
Streams are infinite sequences over a given data type. A stream
specification is a set of equations intended to define a stream.

We propose a transformation from such a stream specification to a
term rewriting system (TRS) in such a way that termination of the 
resulting TRS implies that the stream specification is well-defined, that
is, admits a unique solution. As a
consequence, proving well-definedness of several interesting 
stream specifications can be done fully automatically using present
powerful tools for proving TRS termination.  

In order to increase the power of this approach, we investigate transformations
that preserve semantics and well-definedness. We give examples for which 
the above mentioned technique applies for the transformed specification 
while it fails for the original one.
 \end{abstract}

\keywords{term rewriting, stream specification}
\subjclass{F.4.2, E.1}

\maketitle\vskip1 cm

\section{Introduction}

Streams are among the simplest data types in which the objects are infinite. We
consider streams to be maps from the natural numbers to some data type $D$.
Streams have been studied extensively, e.g., in \cite{AS03}.
The basic constructor for streams is the operator `:' mapping a data element $d$
and a stream $s$ to a new stream $d:s$ by putting $d$ in front of $s$.
Using this operator we can define streams by equations. For instance, the 
stream $\zeros$ only consisting of 0's can be defined by the single equation
$\zeros = 0: \zeros$. 
More complicated streams are defined using stream functions. For instance, the
boolean {\em Fibonacci stream} $\fib$ is defined\footnote{In \cite{AS03} it is
called the infinite Fibonacci word.} as the limit of the strings
$\phi_i$ where $\phi_1 = 1, \; \phi_2 = 0, \; \phi_{i+2} =  \phi_{i+1} \phi_i$
for $i \geq 1$, showing the relationship with Fibonacci numbers. For $f$ being 
the function replacing every 0 by 1 and every 1 by 01, one easily proves by induction 
on $n$ that $f(\phi_n) = \phi_{n+1}$ for all $n \geq 1$. As $\fib$ is the limit of 
these strings, $\fib$ is a fix point of this function $f$ on boolean streams. So
the function $f$ and $\fib$ satisfy the three equations 
\[ \begin{array}{rcl}
f(0:\sigma) & = & 0:1:f(\sigma), \\ 
f(1:\sigma) & = & 0:f(\sigma), \\
\fib & = & f(\fib), \end{array}\]
for all boolean streams $\sigma$.
In this paper we consider stream specifications consisting of such a set of 
equations. We address the most fundamental question one can think of: is the intended
stream uniquely defined by these equations? More precisely, does
such a set of equations admit a unique solution as constants and
functions on streams? So in particular for $\fib$: is the boolean stream $\fib$
uniquely defined by the three equations we gave? We will call this 
{\em well-defined}, and we will show that for the equations for $\fib$ this indeed
holds. 

Although our specification of $\fib$ only consists of a few very simple equations,
the resulting stream is non-periodic and has remarkable properties. For instance, 
one can make a {\em turtle visualization} 
as follows. Choose an initial drawing direction
and traverse the elements of the stream $\fib$ as follows:
if the symbol 0 is read then the drawing direction is moved
120 degrees to the right; if the symbol 1 is read then the drawing direction is
moved 30 degrees to the left. In both cases after doing so a line of unit length
is drawn. Then after 100.000 steps the following picture is obtained.

\includegraphics[height=4.9in,width=5.4in,angle=0]{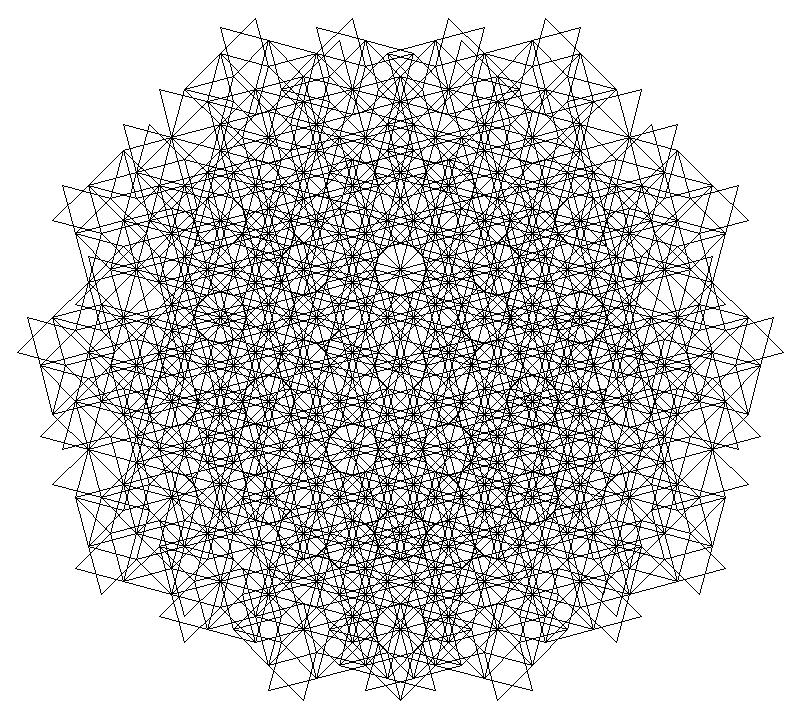}

Another turtle visualization of $\fib$ with different parameters was given in \cite{Z09}.
For turtle visualizations of similar stream specifications we refer to
\begin{center}
{\tt http://www.win.tue.nl/\~{}hzantema/str.html}.
\end{center}

To show that well-definedness does not always hold, observe that the function $f$
defined by 
\[ f(0:\sigma) = 1:f(\sigma), \;\; f(1:\sigma) = 0:f(\sigma)\]
has no fixpoints, that is, adding an equation $c = f(c)$ yields no solution for $c$.
On the other hand, the function $f$ defined by
\[ f(0:\sigma) = 0:f(\sigma), \;\; f(1:\sigma) = 1:f(\sigma)\]
is the identity with infinitely many fixpoints, yielding infinitely many solutions 
of the equation $c = f(c)$. Finally, the function $f$ defined by
\[ f(0:\sigma) = 0:1:f(\sigma), \;\; f(1:\sigma) = 1:0:f(\sigma)\]
has exactly two fixpoints: the Thue-Morse stream and its inverse.

Our approach to prove well-definedness of stream specifications is based on the 
following idea. Derive rewrite rules from the equations in such a way that by 
these rules the $n$-th element of the stream can be computed for every $n$. The 
term rewriting systems (TRS) consisting of these rules will be orthogonal by 
construction, so if the computation yields a result, this result will be unique.
So the remaining key point is to show that the computation always yields a result,
which is the case if the TRS is terminating.  The past ten years showed up a remarkable 
progress in techniques and implementations for proving termination of TRSs
\cite{AG00,GTS04,MZ07}.
One of the objectives of this paper is to exploit this power for
proving well-definedness of stream specifications. In our approach we introduce 
fresh operators $\hd$ and $\tl$ intended to observe streams. We present a 
transformation of the specification to its {\em observational variant}. This 
is a TRS mimicking the stream specification in such a way that $\hd$ or $\tl$ 
applied on any stream constant or stream function can always be rewritten. In 
particular for a stream term $t$ it serves for computing $\hd(\tl^{n-1}(t))$, 
representing the $n$-th element of $t$. So not only a proof of well-definedness is 
provided, our approach also yields an algorithm to compute the $n$-th element of any 
stream term, for any $n$.

This transformation is straightforward and easy to implement; an 
implementation for boolean stream specifications is discussed in Section
\ref{sectool}.

The main result
of this paper states that if the observational variant of a stream specification is
terminating, then the stream specification is well-defined. 
It turns out that for several
interesting cases termination of the observational variant of a specification
can be proved by termination tools like AProVE \cite{aprove} or
TTT2 \cite{ttt2}. This provides a new 
technique to prove well-definedness of stream specifications fully automatically, 
applying for cases where earlier approaches fail. Our main result appears in two 
variants: 
\begin{enumerate}[$\bullet$]
\item a variant restricting to 
ground terms for general stream specifications (Theorem \ref{main}), and 
\item a variant generalizing to all streams for stream specifications not
depending on particular data elements (Theorem \ref{main2}).
\end{enumerate}
By an example we show that the approach does not work for general stream
specifications and functions applied on all streams. Moreover, we show that 
our technique is not complete: the fix point definition of the Fibonacci stream
$\fib$ as we just gave is a well-defined stream 
specification for which the observational variant is non-terminating. However,
we will also investigate transformations from stream specifications to stream
specifications preserving 
semantics, so also preserving well-definedness. Applying such a
transformation to our 
specification of $\fib$ gives an alternative specification specifying the same
stream $\fib$, but for which the observational variant is terminating, to be 
proved automatically by a termination tool. In this way we prove 
well-definedness of $\fib$ with respect to the original stream specification.
More general, applying such semantics preserving transformations increases
the power of our approach to prove well-definedness of stream specifications.
 
Proving well-definedness in stream specification is closely related to proving equality
of streams. A standard approach for this is co-induction \cite{R05}: two streams or stream
functions are equal if a bisimulation can be found between them. Finding such an
arbitrary bisimulation is a hard problem in the general setting, but
restricting to circular co-induction \cite{GLR00,LR09} finding this automatically is tractable.
A strong tool doing so is Circ \cite{LR07,LGCR09}. The tool Circ focuses on proving equality, but
proving well-definedness of a function $f$ can also be proved by equality as long as the 
equations for $f$ are orthogonal: take
a copy $f'$ of $f$ with the same equations, and prove $f' = f$. Here orthogonality is
essential: if for instance a stream $c$ has two rules $c = 0:c$ and $c = 1:c$, then the
system is non-orthogonal and admits every boolean stream as a solution, while by having
a copy $c'$ with the same rules one can prove $c = c'$ by only using the rules $c=0:c$
and $c' = 0:c'$.

The input format of Circ differs from what we call stream specifications: in order to 
fit in the co-induction approach $\hd$ and $\tl$ are already building blocks and the 
Circ input is essentially the same as what we call the observational variant.
Our implementation
as discussed in Section \ref{sectool} offers the facility to transform
a stream specification to Circ format, and also generate the equalities representing
well-definedness in Circ format. For very
simple examples the equalities can be proved automatically by Circ, but for several
small stream specifications Circ fails while our approach succeeds in proving 
well-definedness. Conversely
our approach can be helpful to prove equality of two streams: if one stream satisfies 
the specification of the other one, and both specifications are well-defined, then
the streams are equal. 

Another closely related topic is {\em productivity} of stream specifications, as studied
by \cite{EGH08}. Productive stream specifications are always well-defined.
Conversely we will give an example (Example \ref{exwdnp})
of a stream specification that is
well-defined, but not productive. Our format of stream specifications is
strongly inspired by \cite{EGH08}. In \cite{EGH08} a technique has been developed for
establishing productivity of single ground terms fully automatically for a 
restricted class of stream
specifications. In particular, only a mild type of nesting in the right-hand
sides of the equation is allowed. If these restrictions hold, then the approach yields 
a full decision procedure for productivity, and provides a corresponding implementation
by which for a wide range of examples productivity can be
proved fully automatically. Productivity of a single ground term implies
well-definedness of that single term. On the other hand, our technique often 
applies where their
restrictions do not hold, or for proving well-definedness for systems that are not
productive. Apart from the technique from \cite{EGH08} there are more results on
productivity. An approach to prove productivity by means of
outermost termination has been presented in \cite{ZR09}; a more recent approach using
transformations and context-sensitive termination is presented in \cite{ZR10}. For both
these approaches the power of present termination provers is exploited for proving 
productivity automatically, similar to what we do in this paper for proving
well-definedness.

In \cite{H08} well-definedness of a stream specification is claimed if some particular
syntactic conditions hold, like all right-hand sides of the equations have ":" as its root.
Their result both follows from our main theorem and from the productivity analysis of
\cite{EGH08}.

Both stream equality \cite{R06} and productivity \cite{S09} have
been proved to be $\Pi^0_2$-complete, hence undecidable. By
a similar Turing machine construction the same is expected to hold
for stream well-definedness.

This paper is an extension of the RTA conference paper \cite{Z09} and the corresponding
tool description \cite{Z09a}. Compared to these papers 
\begin{enumerate}[$\bullet$]
\item some definitions have
been slightly modified in order to cover a more general setting,
\item
the process of unfolding and other transformations preserving the semantics have been 
worked out in detail in Section \ref{unf} and Section \ref{secex},
while in \cite{Z09} only some of the ideas were sketched by examples,
\item more examples are given, in particular specifying the paper folding stream 
and the Kolakoski stream.
\end{enumerate} 
The paper is structured as follows. In Section \ref{secbd} we present the basics of 
stream specifications and their models. In Section \ref{unf} we show how a non-proper
stream specification can be unfolded to a proper stream specification preserving
semantics and well-definedness.  In Section \ref{secov} we define the transformation
of a proper stream specification to its observational variant. In Section \ref{secmt} we 
present and prove the main theorem: if the observational variant is terminating then
the specification is well-defined, that is, restricted to ground terms it has a unique 
model.  In Section \ref{sectool} we describe our implementation.
In Section \ref{secdisf} we show that the restriction to ground terms in the main 
theorem may be removed in case the stream specification is data independent, that is, 
left-hand sides of equations do not contain data values.
In Section \ref{secex} we present requirements on transformations on stream 
specifications for
preserving semantics and well-definedness. In case the observational variant of a
stream specification is not terminating, or the tools fail to prove termination,
then we can apply such transformations. Often then the observational variant is
terminating, proving not only well-definedness of the transformed specification, but
also of the original one. One of the corresponding examples serves for proving 
incompleteness of our main theorem. We conclude in Section \ref{secconcl}.

\section{Streams: Specifications and Models}
\label{secbd}

In stream specifications we have two sorts: $s$ (stream) and $d$
(data). We assume the set $D$ of data elements to consist of the unique normal forms of 
ground terms over some signature $\Sigma_d$ with respect to some
terminating orthogonal TRS $R_d$ over $\Sigma_d$. Here all symbols of $\Sigma_d$ 
are of type $d^n \to d$ for some $n \geq 0$. 
We assume a particular symbol $:$ having 
type $d \times s \to s$. For giving the actual stream specification
we need a set $\Sigma_s$ of stream symbols, each being of type
$ d^n \times s^m \to s$ for $n,m \geq 0$. Now terms of sort $s$
are defined inductively as follows:
\begin{enumerate}[$\bullet$]
\item a variable of sort $s$ is a term of sort $s$,
\item if $f \in \Sigma_s$ is of type $ d^n \times s^m \to s$,
$u_1,\ldots,u_n$ are terms over $\Sigma_d$ and 
$t_1,\ldots,t_m$ are terms of sort $s$, then
$f(u_1,\ldots,u_n,t_1, \dots, t_m)$ is a term of sort $s$,
\item if $u$ is a term over $\Sigma_d$ and $t$ is a term of sort
$s$, then $u:t$ is a term of sort $s$.
\end{enumerate}
Note that we do not allow function symbols 
with output sort $d$ and input containing sort $s$. One reason for this is that we 
do not want that distinct data elements are made equal by stream equations.

An equation of sort $s$ is a pair $(\ell,r)$ of terms of sort $s$, usually 
written as $\ell = r$. An equation can also be considered as a rule in a TRS.
For basic properties of TRSs we refer to \cite{BN98}. In particular, an orthogonal 
TRS is always confluent, from which it can be concluded that every term has 
at most one normal form. Here orthogonal means that the left-hand sides of the rules
are non-overlapping, and every variable occurs at most once in any left-hand side.

As a notational convention variables of sort $d$ will be denoted by
$x,y$, terms of sort $d$ by $u,u_i$, variables of sort $s$ by $\sigma,\tau$, and
terms of sort $s$ by $t,t_i$.

\begin{defi}
\label{defss}
A {\em stream
specification} $(\Sigma_d,\Sigma_s,R_d,R_s)$ consists of  
$\Sigma_d,\Sigma_s,R_d$ as given before, and a set $R_s$ of equations
over $\Sigma_d \cup \Sigma_s \cup \{ : \}$ of sort $s$.

A stream specification $(\Sigma_d,\Sigma_s,R_d,R_s)$ is called {\em proper}
if all equations in $R_s$ are of the shape
\[ f(u_1,\ldots,u_n,t_1, \dots, t_m) = t, \]
where 
\begin{enumerate}[$\bullet$]
\item $f \in \Sigma_s$ is of type $ d^n \times s^m \to s$,
\item for every $i = 1,\ldots,m$ the term $t_i$ is either a variable 
of sort $s$, or $t_i = x : \sigma$ where $x$ is a variable of sort
$d$ and $\sigma$ is a variable of sort $s$,
\item $t$ is any term of sort $s$,
\item $R_s \cup R_d$ is orthogonal,
\item Every term of the shape
$f(u_1,\ldots,u_n,u_{n+1}:t_1,\ldots,u_{n+m}:t_m)$
for $f \in \Sigma_s$ of type $ d^n \times s^m \to s$, and
$u_1,\ldots,u_{n+m} \in D$ matches with the left-hand side of an equation from
$R_s$.
\end{enumerate}
\end{defi}\medskip

\noindent Some parts in this definition allow modification, but for being a basis for the rest of
our theory we fix this choice. All of our examples are on boolean streams, but by allowing
data to be ground normal forms of a data TRS, the setting is much more general.

Sometimes we call $R_s$ a stream specification: in that case $\Sigma_d$, 
$\Sigma_s$ consist of the symbols of sort $d$, $s$, respectively, occurring in 
$R_s$, and $R_d = \emptyset$. 

\begin{example}
\label{tm}
For specifying the Thue-Morse sequence the data elements are $0,1$,
and a data operation $\nt$ is used. The data rewrite system $R_d$
consists of the two rules $\nt(0) \to 1$ and $\nt(1) \to 0$. The
set $R_s$ consists of the equations
\[ \begin{array}{rclrcl}
\ms & = & 0:\zip(\inv(\ms),\tl(\ms)) \hspace{8mm} &
\tl(x:\sigma) & = & \sigma \\
\inv(x:\sigma) & = & \nt(x) : \inv(\sigma) &
\zip(x:\sigma,\tau) & = & x : \zip(\tau,\sigma) 
\end{array} \]
This is a proper stream specification.
\end{example}

Definition \ref{defss} is closely related to the definition of stream specification
in \cite{EGH08}. In fact there are two differences:
\begin{enumerate}[$\bullet$]
\item We want to specify streams for every ground term of sort $s$, while in
\cite{EGH08} there is a designated constant to be specified.
\item Our restriction on left-hand sides of $R_s$ in a proper stream specification
is stronger than the exhaustiveness from
\cite{EGH08}. However, by introducing fresh symbols and equations for defining these
fresh symbols, every stream specification in the format of \cite{EGH08} can be
unfolded to a proper stream specification in our format. This is worked out in
Section \ref{unf}.
\end{enumerate}\medskip

\noindent Stream specifications are intended to specify streams for the
constants in $\Sigma_s$, and stream functions for the other elements
of $\Sigma_s$. The combination of these streams and stream functions
is what we will call a {\em stream model}.

More precisely, a {\em stream} over $D$ is a map from the natural 
numbers to $D$. Write $\str$ for the set of all streams over $D$.
In case of $D = \emptyset$ we have $\str = \emptyset$; in case 
of $\# D = 1$ we have $\# \str = 1$. So in non-degenerate cases we
have $\# D \geq 2$.

It seems natural to require that stream functions in a stream model
are defined on all streams. However, it turns out that several desired properties
do not hold when requiring this. Therefore we allow stream functions to be
defined on some set $S \subseteq \str$ for which every ground term can be 
interpreted in $S$.

\begin{defi}
A {\em stream model} is defined to consist of a set $S \subseteq
\str$ and a set of functions $[f]$ for every $f \in \Sigma_s$, where
$[f] : D^n \times S^m \to S$ if the type of $f \in \Sigma_s$ is $d^n \times s^m \to s$.
\end{defi}

For a ground term $u$ over $\Sigma_d$ write $\nf(u)$ for its $R_d$-normal form.
We write $\TT$ for the set of ground terms of sort $s$ over $\Sigma_d \cup \Sigma_s
\cup \{:\}$. 
For $t \in \TT$ the stream 
interpretation $[t]$ in the stream model $(S,([f])_{f \in \Sigma_s})$ is 
defined inductively by:
\[ \begin{array}{rcll}
{} [f(u_1,\ldots,u_n,t_1,\ldots,t_m)] & = &  
[f]([u_1],\ldots,[u_n],[t_1],\ldots,[t_m]) & \mbox{ for $f \in \Sigma_s$} \\
{} [f(u_1,\ldots,u_n)] & = & \nf(f(u_1,\ldots,u_n))
 & \mbox{ for $f \in \Sigma_d$} \\
{} [u:t](0) & = & [u] \\
{} [u:t](i) & = & [t](i-1) & \mbox{ for $i > 0$} \end{array} \]
for all ground terms $u,u_i$ of sort $d$ and all ground terms 
$t,t_i$ of sort $s$. 

So in a stream model: 
\begin{enumerate}[$\bullet$]
\item every data operator is interpreted by its corresponding term
constructor, after which the result is reduced to normal form, 
\item every stream operator $f$ is
interpreted by the given function $[f]$, and 
\item the operator $:$ applied 
on a data element $d$ and a stream $s$ is
interpreted by putting $d$ on the first position and
shifting every stream element of $s$ to its next position. 
\end{enumerate}
Any stream model  $(S,([f])_{f \in \Sigma_s})$ can be restricted to 
a stream model  $(S',([f])_{f \in \Sigma_s})$ satisfying $S' \subseteq S$ and
$S' = \{[t] \mid t \in \TT \}$, note that from
$S' = \{[t] \mid t \in \TT \}$ we conclude that $S'$ is closed under $[f]$ for every
$f \in \Sigma_s$.

\begin{defi}
A stream model $(S,([f])_{f \in \Sigma_s})$  is said to {\em
satisfy} a stream specification $(\Sigma_d,\Sigma_s,R_d,R_s)$ if 
$[\ell \rho] = [r \rho]$ for every equation $\ell = r$
in $R_s$ and every ground substitution $\rho$. We also say that
the specification {\em admits} the model.
\end{defi}

If a stream model $(S,([f])_{f \in \Sigma_s})$ satisfies a stream specification, then
the stream model $(S',([f])_{f \in \Sigma_s})$ defined by $S' = \{[t] \mid t \in \TT \}$
satisfies the same stream specification by definition.

\begin{defi}
A stream specification is {\em well-defined} if there is exactly one 
stream model $(S,([f])_{f \in \Sigma_s})$ satisfying the stream 
specification for which $S = \{[t] \mid t \in \TT \}$.
\end{defi}

One can wonder why to restrict to $S = \{[t] \mid t \in \TT \}$. Another option 
would be simply state $S = \str$. However, sometimes restricting to ground terms 
yields a unique model, while functions applied on arbitrary streams are not unique. 
In Example \ref{exgr} we will see an example of this phenomenon.
By restricting to interpretations of ground terms and ignoring unreachable streams, 
we arrived at our definition of well-definedness. 

Not every proper stream specification is well-defined: if $\# D > 1$ and 
$R_s$ only consists of the equation $c = c$ then every stream $[c]$
satisfies the specification. Less trivial is the 
boolean stream specification
\[ c = 0:f(c), \;\;\;\;\; f(x:\sigma) = \sigma,  \]
in which $[f]$ can be chosen to be the tail function and 
$[c]$ be any stream starting with $0$, yielding several stream models.
There are also proper stream specifications with no model, for instance
\[ \begin{array}{rclrcl}
c & = & f(c), &  f(x:\sigma) & = & g(x,\sigma), \\
g(0,\sigma) & = & 1 : \sigma, &  g(1,\sigma) & = & 0 : \sigma 
\end{array} \]
Here $[f(c)]$ starts with 1 if [c] starts with 0, and conversely, contradicting
$[c] = [f(c)]$.

\section{Unfolding Stream Specifications}
\label{unf}

The specification of the function $f$
\[ f(0:\sigma) = 0:1:f(\sigma), \;\;\; f(1:\sigma) = 0:f(\sigma) \]
in the introduction to define $\fib$ does not meet the requirements of a 
proper stream specification
since the argument $0:\sigma$ in the left-hand side $f(0:\sigma)$ is not of the
right shape. Introducing a fresh symbol $g$ and unfolding yields
\[ \begin{array}{rclrcl}
f(x:\sigma) & = & g(x,\sigma) \hspace{1cm} &
g(0,\sigma) & = & 0:1:f(\sigma) \\
&&& g(1,\sigma) & = & 0:f(\sigma) \end{array} \]
satisfying the requirements of a proper stream specification. In this section
we precisely define this unfolding and show that it does not influence well-definedness.

Let $(\Sigma_d,\Sigma_s,R_d,R_s)$ be a stream specification in which $R_s$ contains an 
equation
\[ f(u_1,\ldots,u_n,t_1, \dots, t_m) = t, \]
where $f \in \Sigma_s$ is of type $ d^n \times s^m \to s$, and for some 
$i \in \{1,\ldots,m\}$ the term $t_i$ is of the shape
$t_i = u : t'$ where not both $u$ and $t'$ are variables, so the stream specification
is not proper. Then the {\em unfolded} stream specification on $f$ on position $i$, 
denoted as $\unf_{f,i}(\Sigma_d,\Sigma_s,R_d,R_s)$, is obtained by adding a fresh
symbol $g$ of type $ d^{n+1} \times s^m \to s$ to $\Sigma_s$, adding an equation
\[ f(x_1,\ldots,x_n, \sigma_1, \ldots, x_{n+1}:\sigma_i, \ldots, \sigma_m)
= g(x_1,\ldots,x_{n+1}, \sigma_1, \ldots, \sigma_m) \]
to $R_s$, where $x_{n+1}:\sigma_i$ is in the $i$-th stream position of $f$,
and in which every equation in $R_s$ of the shape
\[ f(u_1,\ldots,u_n,t_1, \ldots,u : t', \ldots t_m) = t \]
where $u : t'$ is on the $i$-th stream position of $f$, is replaced by
\[ g(u_1,\ldots,u_n,u,t_1, \ldots,t', \ldots t_m) = t, \]
where $t'$ is on the $i$-th stream position of $g$.

Applying $\unf_{f,1}$ on the $\fib$ stream specification from the introduction
yields 
\[ \begin{array}{rclrcl}
f(x:\sigma) & = & g(x,\sigma) \hspace{1cm} &
g(0,\sigma) & = & 0:1:f(\sigma) \\
\fib & = & f(\fib) & g(1,\sigma) & = & 0:f(\sigma) \end{array} \]
which is indeed a proper stream specification.

In general, for every exhaustive stream specification in the sense of \cite{EGH08}, by
repeatedly applying $\unf_{f,i}$ for various $f$, $i$, as long as an equation of the shape 
$f(u_1,\ldots,u_n,t_1, \ldots,u : t', \ldots t_m) = t$ exists for which not both
$u$ and $t'$ are variables, a proper stream specification in our sense can be obtained.

In order to justify this unfolding it remains to prove that the original stream specification
is well-defined if and only if the unfolded variant is well-defined, and in case of
well-definedness they define the same. More precisely, we prove that the
transformation $\unf_{f,i}$ {\em preserves} semantics, defined as follows. 

\begin{defi}
A transformation $\Phi$ mapping a stream specification $(\Sigma_d,\Sigma_s,R_d,R_s)$
to $(\Sigma_d,\Sigma'_s,R_d,R'_s)$ satisfying $\Sigma_S \subseteq \Sigma'_s$ is said 
to {\em preserve semantics} if 
\begin{enumerate}[$\bullet$]
\item $(\Sigma_d,\Sigma_s,R_d,R_s)$ is well-defined if and only if 
$(\Sigma_d,\Sigma'_s,R_d,R'_s)$ is well-defined, and
\item If $(\Sigma_d,\Sigma_s,R_d,R_s)$ is well-defined with corresponding model 
$(S,[\cdot])$, and $(S',[\cdot]')$ is the model corresponding to 
$(\Sigma_d,\Sigma'_s,R_d,R'_s)$, then $[t] = [t]'$ for
all ground terms of sort $s$ over $\Sigma_d \cup \Sigma_s$.
\end{enumerate}
\end{defi}\medskip

\noindent Obviously, preservation of semantics is closed under composition of such transformations.

We prove that $\unf_{f,i}$ preserves semantics in two steps:
first we only add the equation for the fresh symbol $g$, and then we do the replacement of the
$f$-equations by $g$-equations. For each of these two steps we show by a more general lemma
that semantics is preserved.

\begin{lem}
\label{lemunf1}
Let $(\Sigma_d,\Sigma_s,R_d,R_s)$ be a stream specification. Let 
$g \not\in \Sigma_s$ be of type $ d^{n+1} \times s^m \to s$.
Let $R'_s$ be the union of $R_s$ and an equation
\[ t = g(x_1,\ldots,x_{n+1}, \sigma_1, \ldots, \sigma_m) \]
in which the symbol $g$ does not occur in $t$, and $t$ does not contain variables
other than $x_1,\ldots,x_{n+1}, \sigma_1, \ldots, \sigma_m$.
Then transforming $(\Sigma_d,\Sigma_s,R_d,R_s)$ to
$(\Sigma_d,\Sigma_s \cup \{g\},R_d,R'_s)$ preserves semantics.
\end{lem}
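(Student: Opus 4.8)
### Proof Proposal for Lemma \ref{lemunf1}

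The plan is to exhibit a bijective correspondence between the stream models of the two specifications, and then argue that this correspondence respects both the ``satisfies'' relation and the interpretation of terms over the original signature. The key observation is that the new equation $t = g(x_1,\ldots,x_{n+1},\sigma_1,\ldots,\sigma_m)$ is a \emph{definitional} equation: since $g$ does not occur in $t$ and $t$ has no variables beyond those listed, the equation completely determines the function $[g]$ in terms of data already present in a model of $(\Sigma_d,\Sigma_s,R_d,R_s)$.

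First I would set up the correspondence in both directions. Given a stream model $(S,([f])_{f\in\Sigma_s})$ satisfying $(\Sigma_d,\Sigma_s,R_d,R_s)$ with $S=\{[u]\mid u\in\TT\}$, I define a stream model for the enlarged specification by keeping all the $[f]$ unchanged and setting
\[ [g](a_1,\ldots,a_{n+1},s_1,\ldots,s_m) = [t]\bigl(x_i\mapsto a_i,\ \sigma_j\mapsto s_j\bigr), \]
for $a_i\in D$ and $s_j\in S$; this is well-defined because $[t]$ depends only on the listed variables. Conversely, any model of $(\Sigma_d,\Sigma_s\cup\{g\},R_d,R'_s)$ restricts to a model of the original specification by forgetting $[g]$, and the new equation forces $[g]$ to equal exactly the value displayed above, so the restriction followed by extension is the identity. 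The main thing to verify here is that the extended model still satisfies $R'_s$: the old equations hold because the $[f]$ and the interpretation of $:$ are untouched, and the one new equation holds by construction.

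The crux of the argument, and the step I expect to require the most care, is showing that this extension does not change the carrier set, i.e.\ that $\{[u]'\mid u\in\TT'\}=\{[u]\mid u\in\TT\}$ where $\TT'$ allows the symbol $g$. The inclusion $\supseteq$ is immediate. For $\subseteq$ I would argue that every ground term using $g$ denotes the same stream as a ground term over the original signature: by the defining equation, any subterm $g(u_1,\ldots,u_{n+1},v_1,\ldots,v_m)$ has interpretation equal to $[t\rho]$ for the appropriate ground substitution $\rho$, and $t$ is over $\Sigma_d\cup\Sigma_s\cup\{:\}$, so innermost replacement of $g$-subterms yields an equal-valued term in $\TT$. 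This guarantees the two carriers coincide, so the extended model indeed has the form required by the definition of well-definedness.

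Finally I would assemble these pieces to conclude preservation of semantics. If the original specification is well-defined, its unique model extends (uniquely, since $[g]$ is forced) to a model of the enlarged specification with the same carrier, and any model of the enlarged specification restricts to a model of the original one; uniqueness transfers in both directions, giving the ``if and only if.'' The second clause, that $[u]=[u]'$ for all ground $u$ over $\Sigma_d\cup\Sigma_s$, is then immediate because the extension leaves every $[f]$ and the interpretation of $:$ unchanged, so an easy induction on the structure of $u$ shows the two interpretations agree.
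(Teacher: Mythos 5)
Your proposal is correct and follows essentially the same route as the paper's proof: extend the model by defining $[g]$ from the term $t$, restrict by forgetting $g$, use the shape of the new equation to show every $g$-term is interpretation-equal to a $g$-free term so the carriers $\{[u] \mid u \in \TT\}$ coincide, and conclude both clauses of semantics preservation. The only cosmetic difference is that you define $[g]$ via a valuation on the open term $t$, whereas the paper evaluates $t\rho$ on ground-term representatives of the streams $s_i$ (using the carrier condition) and then checks independence of the choice of representatives via compositionality — the two definitions agree.
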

\begin{proof}
First assume that the stream model $(S,([f])_{f \in \Sigma_s})$ satisfies the
stream specification $(\Sigma_d,\Sigma_s,R_d,R_s)$ and $S = \{[t] \mid t \in \TT
\}$. For $s_1,\ldots,s_m \in S$ choose $t_1,\ldots,t_m \in \TT$ such that 
$s_i = [t_i]$ for $i=1,\ldots,m$. Now for $d_1,\ldots,d_{n+1} \in D$ define
\[ [g](d_1,\ldots,d_{n+1},s_1,\ldots,s_m) = [t \rho] \]
for $\rho$ defined by $\rho(x_i) = d_i$ for $i=1,\ldots,n+1$ and 
$\rho(\sigma_i) = s_i$ for $i=1,\ldots,m$. Due to the compositional shape of the 
definition of $[f]$ for $f \in \Sigma_s$ this definition of $g$ is independent of
the choice of $t_1,\ldots,t_m \in \TT$. By construction this yields a stream
model $(S,([f])_{f \in \Sigma_s \cup \{g\}})$ satisfying 
$(\Sigma_d,\Sigma_s \cup \{g\},R_d,R'_s)$ and $S = \{[t] \mid t \in \TT \}$,
where in the latter $\TT$ stands for ground terms including the symbol $g$.

Conversely, assume we have a stream
model $(S,([f])_{f \in \Sigma_s \cup \{g\}})$ satisfying 
$(\Sigma_d,\Sigma_s \cup \{g\},R_d,R'_s)$ and $S = \{[t] \mid t \in \TT \}$.
Then by ignoring $g$ it is also a stream model satisfying $(\Sigma_d,\Sigma_s,R_d,R_s)$.
Due to the shape of the equation containing $g$, for every ground term $t'$ 
containing $g$ there is a ground term $t''$ not containing $g$ satisfying $[t''] = [t']$.
So we also have $S = \{[t] \mid t \in \TT \}$ for $\TT$ standing for the ground 
terms not containing $g$.

Summarizing, a model for $(\Sigma_d,\Sigma_s,R_d,R_s)$ yields a model for
$(\Sigma_d,\Sigma_s \cup \{g\},R_d,R'_s)$ and conversely, keeping the same set $S$
and both satisfying $S = \{[t] \mid t \in \TT \}$. This proves the first requirement
of semantics preservation. The second requirement holds since the
interpretations of ground terms are the same in both models.
\end{proof}

In Lemma \ref{lemunf1} the signature was extended by a fresh symbol, while except
for adding one equation for this fresh symbol, the equations remained the same.
In the next lemma it is the other way around: now the signature remains the same
and the equations may be modified. For a set $R$ of equations we write $=_R$ for 
the congruence generated by $R$, that is, the closure of $R$ under substitutions,
contexts, reflexivity, symmetry and transitivity.

\begin{lem}
\label{lemunf2}
Let $(\Sigma_d,\Sigma_s,R_d,R_s)$ and $(\Sigma_d,\Sigma_s,R_d,R'_s)$ be stream 
specifications satisfying $\ell =_{R'_s} r$ for all $\ell = r$ in $R_s$, and
$\ell =_{R_s} r$ for all $\ell = r$ in $R'_s$. 
Then transforming $(\Sigma_d,\Sigma_s,R_d,R_s)$ to
$(\Sigma_d,\Sigma_s,R_d,R'_s)$ preserves semantics.
\end{lem}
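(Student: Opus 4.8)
The plan is to prove that both stream specifications have exactly the same satisfying models, from which semantics preservation follows immediately. The crucial observation is that satisfaction of a stream specification is equivalent to the interpretation map $[\cdot]$ respecting the congruence $=_{R_s}$: a model $(S,([f])_{f \in \Sigma_s})$ satisfies $(\Sigma_d,\Sigma_s,R_d,R_s)$ precisely when $[\ell\rho] = [r\rho]$ for every equation $\ell = r$ in $R_s$ and every ground substitution $\rho$. So my first step is to establish the key bridging claim: \emph{if a model satisfies all equations of a set $R_s$, then $p =_{R_s} q$ implies $[p\rho] = [q\rho]$ for all ground substitutions $\rho$.} This is the heart of the argument, and everything else is bookkeeping around it.

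I would prove this claim by induction on the derivation of $p =_{R_s} q$, following the inductive structure of $=_{R_s}$ as the closure under substitutions, contexts, reflexivity, symmetry, and transitivity. The reflexivity, symmetry, and transitivity cases are trivial since $=$ on streams is an equivalence relation. For the closure under substitution, if $p = \ell\tau$ and $q = r\tau$ for some $\ell = r$ in $R_s$ and substitution $\tau$, then $[p\rho] = [\ell(\tau\rho)]$ and by satisfaction $[\ell(\tau\rho)] = [r(\tau\rho)] = [q\rho]$; here I rely on $\tau\rho$ being (extendable to) a ground substitution. For the closure under contexts, I use the compositional definition of $[\cdot]$ given in the excerpt: the interpretation of $f(\ldots,t_i,\ldots)$ depends only on the interpretations $[t_i]$ of its arguments, and likewise for $u:t$, so replacing a subterm by one with equal interpretation leaves the whole interpretation unchanged. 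This compositionality is exactly what makes the context case go through.

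With the bridging claim in hand, the rest is symmetric and short. Suppose $(S,([f])_{f \in \Sigma_s})$ satisfies $(\Sigma_d,\Sigma_s,R_d,R_s)$ with $S = \{[t] \mid t \in \TT\}$. To show it satisfies $(\Sigma_d,\Sigma_s,R_d,R'_s)$, take any equation $\ell = r$ in $R'_s$; by hypothesis $\ell =_{R_s} r$, so by the claim $[\ell\rho] = [r\rho]$ for all ground $\rho$, which is exactly satisfaction of this equation. The converse direction is identical with the roles of $R_s$ and $R'_s$ interchanged, using the other hypothesis $\ell =_{R'_s} r$ for all $\ell = r$ in $R_s$. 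Since the signature $\Sigma_s$ and the set $\TT$ of ground terms are unchanged by the transformation, a model satisfying one specification with $S = \{[t] \mid t \in \TT\}$ satisfies the other with the very same $S$ and the very same interpretations, so the two specifications admit exactly the same models. This gives the first requirement of semantics preservation (well-definedness of one iff the other), and the second requirement holds trivially because the interpretations $[t]$ of ground terms are literally unchanged.

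The step I expect to be the main obstacle is the compositionality argument in the context case of the bridging claim: one must be careful that the inductive definition of $[\cdot]$ truly factors through the interpretations of immediate subterms, so that $[C[p]\rho] = [C[q]\rho]$ follows from $[p\rho] = [q\rho]$ for an arbitrary one-hole context $C$. A minor technical point to handle cleanly is that a substitution arising in a subderivation need not be ground, so I would phrase the claim for all terms $p,q$ (possibly with variables) and quantify over all ground substitutions $\rho$ that ground them, which keeps the substitution and context cases uniform.
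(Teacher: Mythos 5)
Your proof is correct and takes essentially the same approach as the paper: the paper's own proof is a one-liner asserting that ``from the given connection between $R_s$ and $R'_s$ it is immediate'' that a model satisfies one specification iff it satisfies the other, which is precisely your bridging claim. Your induction over the derivation of $=_{R_s}$ (including the care about extending $\tau\rho$ to a ground substitution and the compositionality of $[\cdot]$ in the context case) just makes explicit the details the paper leaves implicit.
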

\begin{proof}
From the given connection between $R_s$ and $R'_s$ it is immediate that a model 
satisfies $(\Sigma_d,\Sigma_s,R_d,R_s)$ if and only if it 
satisfies $(\Sigma_d,\Sigma_s,R_d,R'_s)$. From this the lemma follows.
\end{proof}
 
\begin{thm}
\label{thmunf}
The transformation $\unf_{f,i}$ preserves semantics on stream specifications
on which it is defined.
\end{thm}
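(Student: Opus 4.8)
The plan is to factor $\unf_{f,i}$ into the two transformations that Lemmas \ref{lemunf1} and \ref{lemunf2} were tailored to handle, and then close under composition. Write $E$ for the new equation
\[ f(x_1,\ldots,x_n, \sigma_1, \ldots, x_{n+1}:\sigma_i, \ldots, \sigma_m) = g(x_1,\ldots,x_{n+1}, \sigma_1, \ldots, \sigma_m) \]
adjoined by $\unf_{f,i}$. Let $\Phi_1$ be the transformation sending $(\Sigma_d,\Sigma_s,R_d,R_s)$ to $(\Sigma_d,\Sigma_s \cup \{g\},R_d,R_s \cup \{E\})$, which only introduces the fresh symbol $g$ together with $E$, and let $\Phi_2$ be the transformation that keeps the signature $\Sigma_s \cup \{g\}$ fixed and replaces each original $f$-equation whose $i$-th stream argument has root $:$ by its $g$-counterpart, leaving $E$ and all other equations untouched. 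Then $\unf_{f,i} = \Phi_2 \circ \Phi_1$, so it suffices to show that $\Phi_1$ and $\Phi_2$ each preserve semantics and to invoke the remark that semantics preservation is closed under composition.

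For $\Phi_1$ I would apply Lemma \ref{lemunf1} with $t := f(x_1,\ldots,x_n, \sigma_1, \ldots, x_{n+1}:\sigma_i, \ldots, \sigma_m)$. Its hypotheses are immediate: $g$ is fresh, hence does not occur in $t$, and the variables occurring in $t$ are exactly $x_1,\ldots,x_{n+1},\sigma_1,\ldots,\sigma_m$, as demanded. Thus $\Phi_1$ preserves semantics.

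For $\Phi_2$ I would apply Lemma \ref{lemunf2} to the equation sets $R_s^{(1)} := R_s \cup \{E\}$ and $R'_s$ (the output of $\unf_{f,i}$), both over $\Sigma_s \cup \{g\}$; the one point needing work is that they generate the same congruence. The equations common to the two sets are trivial, so only the replaced pairs matter. Instantiating $E$ via $x_j \mapsto u_j$ and $\sigma_j \mapsto t_j$ for $j \ne i$, together with $x_{n+1} \mapsto u$ and $\sigma_i \mapsto t'$, gives
\[ f(u_1,\ldots,u_n,t_1, \ldots,u : t', \ldots, t_m) \;=_{\{E\}}\; g(u_1,\ldots,u_n,u,t_1, \ldots,t', \ldots, t_m). \]
Chaining this with the original $f$-equation (present in $R_s^{(1)}$) yields $g(u_1,\ldots,u_n,u,t_1,\ldots,t',\ldots,t_m) =_{R_s^{(1)}} t$ for every $g$-equation of $R'_s$, and chaining it with the corresponding $g$-equation (present in $R'_s$) yields $f(u_1,\ldots,u_n,t_1,\ldots,u:t',\ldots,t_m) =_{R'_s} t$ for every replaced $f$-equation. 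Both congruence inclusions therefore hold, so Lemma \ref{lemunf2} shows $\Phi_2$ preserves semantics, and the theorem follows by composition. I expect this congruence verification in the $\Phi_2$ step to be the only part carrying real content; the rest is bookkeeping about term shapes and variable occurrences.
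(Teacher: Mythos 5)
Your proposal is correct and takes essentially the same route as the paper's own proof: factor $\unf_{f,i}$ into the addition of the fresh $g$-equation (justified by Lemma \ref{lemunf1}) followed by the replacement of the $f$-equations by their $g$-counterparts (justified by Lemma \ref{lemunf2}, with both convertibility directions witnessed by the added equation), and conclude by closure of semantics preservation under composition. The only difference is cosmetic: you spell out the congruence verification in the Lemma \ref{lemunf2} step, which the paper compresses into the remark that both directions follow by applying the new equation.
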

\begin{proof}
The operation $\unf_{f,i}$ consists of two steps: the addition of an equation
generating $g$ and the replacement of existing equations for $f$. The addition
preserves semantics due to Lemma \ref{lemunf1}. For the replacement
Lemma \ref{lemunf2} applies, for both directions applying the equation
\[ f(x_1,\ldots,x_n, \sigma_1, \ldots, x_{n+1}:\sigma_i, \ldots, \sigma_m)
= g(x_1,\ldots,x_{n+1}, \sigma_1, \ldots, \sigma_m). \]
As both transformations preserve semantics, the same holds for the composition
$\unf_{f,i}$. 
\end{proof}

\section{The Observational Variant}
\label{secov}

We define a transformation $\Obs$ transforming the 
original set of equations $R_s$ in a proper stream specification to 
its {\em observational variant} $\Obs(R_s)$, being a TRS. The basic idea is that the streams are observed by
two auxiliary operators $\hd$ and $\tl$, of which $\hd$ picks the
first element of the stream and $\tl$ removes the first element
from the stream, and that for every $t \in \TT$  of type stream both 
$\hd(t)$ and $\tl(t)$ can be rewritten by $\Obs(R_s)$.

The main result of this paper is that if $\Obs(R_s) \cup R_d$ is 
terminating for a given proper stream specification $(\Sigma_d,\Sigma_s,R_d,R_s)$,
then the specification is well-defined, that is, it admits a unique model 
$(S,([f])_{f \in \Sigma_s})$
satisfying $S = \{[t] \mid t \in \TT \}$. As a consequence, the
specification uniquely defines a corresponding stream $[t]$ for
every $t \in \TT$.

We define $\Obs(R_s)$ in two steps. First we define $\pob(R_s)$
obtained from $R_s$ by modifying the equations as follows. By
definition every equation of $R_s$ is of the shape
\[ f(u_1,\ldots,u_n,t_1, \dots, t_m) = t \]
where for every $i = 1,\ldots,m$ the term $t_i$ is either a variable 
of sort $s$, or $t_i = x : \sigma$ where $x$ is a variable of sort
$d$ and $\sigma$ is a variable of sort $s$. In case $t_i = x :
\sigma$ then in the left-hand side of the equation the subterm $t_i$ is 
replaced by $\sigma$, while in the right-hand side of the equation
every occurrence of $x$ is replaced by $\hd(\sigma)$ and 
every occurrence of $\sigma$ is replaced by $\tl(\sigma)$.

For example, the equation for $\zip$ in Example \ref{tm} will be replaced
by
\[ \zip(\sigma,\tau)  \to  \hd(\sigma) : \zip(\tau,\tl(\sigma)). \] 

Now we are ready to define $\Obs$.

\begin{defi}
Let $(\Sigma_d,\Sigma_s,R_d,R_s)$ be a proper stream specification; $\tl \not\in \Sigma$. 
Let $\pob(R_s)$ be defined as above. Then $\Obs(R_s)$ is the TRS over
$(\Sigma_d \cup \Sigma_s) \cup \{:, \hd, \tl \}$ consisting of
\begin{enumerate}[$\bullet$]
\item the two rules 
\[ \hd(x: \sigma) \to x, \;\;\; \tl(x:\sigma) \to \sigma,  \]
\item for every rule in $\pob(R_s)$ of the shape $\ell \to u : t$
the two rules 
\[ \hd(\ell) \to u, \;\;\; \tl(\ell)  \to t,  \]
\item for every rule in $\pob(R_s)$ of the shape $\ell \to r$ with
$\rt(r) \neq \; : \;$ the two rules 
\[ \hd(\ell) \to \hd(r), \;\;\; \tl(\ell) \to \tl(r). \]
\end{enumerate}
\end{defi}\medskip

\noindent The reason for first transforming $R_s$ to $\pob(R_s)$ is
that for the validity of the main theorem we need the special shape of
the rules of $\Obs(R_s)$ in which apart from the root symbol $\hd$ or
$\tl$ and one symbol from $\Sigma_s$, every left-hand sides only
consists of variables.

\begin{example}
\label{tmo}
For the set $R_s$ of equations given in Example \ref{tm} we rename the symbol
$\tl$ by $\tlp$ in order to keep the symbol $\tl$ for the fresh
symbol introduced in the $\Obs$ construction. Then the TRS $\Obs(R_s)$
consists of the following rules:
\[ \begin{array}{rclrcl}
\hd(x: \sigma) & \to & x &
\hd(\tlp(\sigma)) & \to & \hd(\tl(\sigma)) \\
\tl(x:\sigma) & \to & \sigma &
\tl(\tlp(\sigma)) & \to & \tl(\tl(\sigma)) \\
\hd(\ms) & \to & 0 &
\hd(\zip(\sigma,\tau)) & \to & \hd(\sigma) \\
\tl(\ms) & \to & \zip(\inv(\ms),\tlp(\ms)) \hspace{5mm} &
\tl(\zip(\sigma,\tau)) & \to & \zip(\tau,\tl(\sigma)) \\
\hd(\inv(\sigma)) & \to & \nt(\hd(\sigma)) \\
\tl(\inv(\sigma)) & \to & \inv(\tl(\sigma)) \\
\end{array} \]
Together with the rules $\nt(0) \to 1$ and $\nt(1) \to 0$ from
$R_d$ this TRS is terminating as can easily be proved fully
automatically by AProVE \cite{aprove} or TTT2 \cite{ttt2}. 
As a consequence, the result of
this paper states that the specification uniquely defines a stream
for every ground term of type $s$, in particular for $\ms$. 
\end{example}

\section{The Main Theorem}
\label{secmt}

We start this section by presenting our main theorem.

\begin{thm}
\label{main}
Let $(\Sigma_d,\Sigma_s,R_d,R_s)$ be a proper stream specification for
which the TRS $\Obs(R_s) \cup R_d$ is terminating.  Then the stream
specification is well-defined.
\end{thm}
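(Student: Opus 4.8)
The plan is to establish existence and uniqueness of the stream model separately, both leveraging termination of $\Obs(R_s) \cup R_d$.

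For existence, the key observation is that termination gives us normal forms. The rules of $\Obs(R_s)$ are designed so that $\hd(t)$ and $\tl(t)$ can always be rewritten for any ground stream term $t$: the last clause of properness guarantees that every stream function applied to sufficiently instantiated arguments matches some left-hand side. First I would verify that $\Obs(R_s) \cup R_d$ is not only terminating but also orthogonal (inherited from orthogonality of $R_s \cup R_d$ in the proper specification, plus the non-overlapping head/tail rules), hence confluent, so normal forms are unique. Then I would define, for each ground term $t \in \TT$, a candidate stream $[t]$ by setting $[t](n)$ to be the $R_d$-normal form of $\hd(\tl^n(t))$ computed via $\Obs(R_s)$; termination ensures this computation halts and yields a data normal form. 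The set $S$ is $\{[t] \mid t \in \TT\}$, and each $[f]$ is read off from these interpretations. The main work is checking that this model actually satisfies $R_s$: for an equation $\ell = r$ and ground substitution $\rho$, I must show $[\ell\rho]$ and $[r\rho]$ agree at every position, which I would prove by relating the observational rewriting of $\hd(\tl^n(\ell\rho))$ and $\hd(\tl^n(r\rho))$ through the $\Obs$ rules derived from that very equation.

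For uniqueness, suppose $(S,([f]))$ is any model with $S = \{[t] \mid t \in \TT\}$ satisfying the specification. I would show that the interpretation is forced: proceed by induction (on the length of the terminating $\Obs$-reduction, i.e.\ a well-founded induction using the termination order) to prove that $[t](n)$ must equal the value computed by the observational rewriting. The crucial point is that satisfying $R_s$ forces $[t]$ to respect the $\hd$/$\tl$ equations, so every position of every stream interpretation is pinned down by the deterministic (confluent) computation. Since the same computation determines the model built in the existence part, any satisfying model coincides with it on all ground terms, giving uniqueness.

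The hard part will be the soundness argument in the existence step: carefully justifying that the $\Obs$-transformation faithfully simulates the original equations, so that the positionwise definition via $\hd(\tl^n(\cdot))$ genuinely yields a model of $R_s$ rather than merely a well-defined family of streams. This requires a lemma connecting a single application of an $R_s$-equation to the corresponding family of $\Obs(R_s)$-rules, and handling the substitution of the special left-hand side arguments $x:\sigma$ correctly (where $\pob$ replaced $x$ by $\hd(\sigma)$ and $\sigma$ by $\tl(\sigma)$). I would isolate this as a simulation lemma and expect the induction on $n$ together with confluence to carry the rest fairly mechanically.
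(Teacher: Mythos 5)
Your overall architecture matches the paper's: define $[t](n)$ as the unique $\Obs(R_s)\cup R_d$-normal form of $\hd(\tl^n(t))$ (using termination plus orthogonality-hence-confluence, and the shape of the rules to see the normal form lies in $D$), verify satisfaction of $R_s$ by joining $\hd(\ell\rho)$ with $\hd(r\rho)$ and $\tl(\ell\rho)$ with $\tl(r\rho)$, and force uniqueness by showing any satisfying model must reproduce the observational computation. However, there is one genuine gap, and it sits exactly where you wave it away with ``each $[f]$ is read off from these interpretations.'' A stream model requires actual functions $[f] : D^n \times S^m \to S$ on the set of \emph{streams} $S = \{[t] \mid t \in \TT\}$, and defining $[f]$ via representatives only makes sense if the result is independent of the representatives: whenever $[t_i] = [t'_i]$ as streams for all $i$, you must prove $[f(u_1,\ldots,u_n,t_1,\ldots,t_m)] = [f(u_1,\ldots,u_n,t'_1,\ldots,t'_m)]$. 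This does \emph{not} follow from confluence, since $t_i$ and $t'_i$ may be syntactically unrelated terms (e.g., two different specifications of the same stream, each in normal form or reducing along disjoint paths) that merely happen to denote equal streams; they need not be joinable in the TRS, so no Church--Rosser argument applies. The paper devotes its central Lemma (Lemma \ref{lemmodel}, proved via two claims by well-founded induction on the order ``reduct-of-then-subterm'' induced by $\to_{\Obs(R_s)\cup R_d}^*$) to precisely this congruence property, tracking how an occurrence of $t$ inside an arbitrary ground context over $\Sigma_s \cup \Sigma_d \cup \{:,\hd,\tl\}$ can be replaced by $t'$ without changing any observation $\hd(\tl^i(\cdot))$. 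Without this lemma your ``model'' is only a well-defined family of ground-term interpretations, not a model, and both existence and the meaning of uniqueness collapse.

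Two smaller remarks. First, you identify the satisfaction step as ``the hard part,'' but in the paper it is the easy part (Lemma \ref{lem2} is a few lines: instantiate the $\pob$-transformed left-hand side with $\sigma\rho' = x\rho : \sigma\rho$ and contract $\hd(x:\sigma)\to x$ a few times); your instinct about handling the $x:\sigma \mapsto \hd(\sigma),\tl(\sigma)$ substitution is correct but it is routine. Second, in the uniqueness direction your induction needs more care than ``the model must respect the $\hd$/$\tl$ equations'': the given model is only assumed to satisfy $R_s$ for ground substitutions into $\TT$, and $S$ need not be closed under tail, so one must first extend the model with junk values outside $S$, define $[\hd]$ and $[\tl]$ semantically, and then restrict attention to \emph{innermost} $\Obs(R_s)$-reductions, which decompose into steps of $\EE = \{\hd(x:\sigma)=x,\ \tl(x:\sigma)=\sigma,\ \sigma = \hd(\sigma):\tl(\sigma)\}$ plus $R_s$-steps whose redexes contain no $\hd$ or $\tl$ and hence lie in $\TT$, where satisfaction applies. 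Your sketch points in the right direction but would fail as stated if the reduction were not restricted to innermost steps, since a non-innermost $R_s$-redex may contain $\hd$/$\tl$ and fall outside the hypothesis on the model.
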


Recall that a stream specification is defined to be well-defined if it
admits a unique model $(S,([f])_{f \in \Sigma_s})$
satisfying $S = \{[t] \mid t \in \TT \}$. 
Before proving the theorem we show by an example why it is
essential to restrict to $S = \{[t] \mid t \in \TT \}$ rather than
choosing $S = \str$. A degenerate example is obtained if there are no
constants of sort $s$, and hence $\TT = \emptyset$. More
interesting is the following.

\begin{example}
\label{exgr}
Consider the proper boolean stream specification with $R_d = \emptyset$
and $R_s$ consists of: 
\[ \begin{array}{rclrcl}
c & = & 1 : c & \hspace{1cm}
f(x:\sigma) & = & g(x,\sigma) \\
g(0,\sigma) & = & f(\sigma) \\
g(1,\sigma) & = & 1 : f(\sigma)
\end{array} \]
obtained by unfolding
\[ \begin{array}{rclrcl}
c & = & 1 : c & \hspace{1cm}
f(0:\sigma) & = & f(\sigma) \\
f(1:\sigma) & = & 1 : f(\sigma)
\end{array} \]
The function $f$ has been specified in such a way that it tries to 
remove all 0's from its argument. So for streams specified by terms like $f(c)$ 
there is nothing to remove, and we expect well-definedness: the term $f(c)$ will 
uniquely be defined to be the stream of only ones. However, for streams
containing only finitely many 1's this may be problematic. Note that
by the symbols $c$, $:$, $0$ and $1$ only the streams with
finitely many 0's can be constructed, so for ground terms over the symbols occurring in
the specification this problem
does not arise. Indeed, it turns out that the TRS $\Obs(R_s) \cup R_d$ is terminating,
so by Theorem \ref{main} the specification is well-defined.
It is interesting to remark that the approach from \cite{EGH08} fails to
prove productivity, as this stream 
specification is not \emph{data-obliviously productive}, i.e., the  
identity of the data is essential for productivity. Moreover, also 
Circ \cite{LGCR09} fails to prove well-definedness of this stream
specification.

We concluded that this example is well-defined: it admits a 
unique model $(S,([f])_{f \in \Sigma_s})$
satisfying $S = \{[t] \mid t \in \TT \}$. However, when extending
to all streams the function $[f] : \str \to \str$ is not uniquely
defined, even if we strengthen the requirement of $[\ell \rho] = [r
\rho]$ for all equations $\ell = r$ and all ground substitutions
$\rho$ to an open variant in which the $\sigma$'s in the equations are
replaced by arbitrary streams. Write $\ones$ and $\zeros$ for the
streams only consisting of ones, resp. zeros. Two distinct models 
$[\cdot]_1$ and $[\cdot]_2$ satisfying the stream specification are
defined by:
\[ [c]_1 = [f]_1(s) = [g]_1(u,s) = \ones \mbox{ for all $s \in \str,
u \in D$}, \]
and $[c]_2 = \ones$, and $[f]_2(s) = [g]_2(u,s) = \ones$
for $u \in D$ and streams $s$ containing infinitely many ones, and
$[f]_2(s) = 1^n: \zeros, \; [g]_2(u,s) = [f]_2(u:s)$
for $u \in D$ and streams $s$ containing $n < \infty$ ones.
\end{example}

Now we arrive at the proof of Theorem \ref{main}. The plan of the
proof is as follows.
\begin{enumerate}[$\bullet$]
\item First we construct a function $[\cdot]_1 : \TT \to \str$, and
choose $S_1 = \{[t]_1 \mid t \in \TT \}$.
\item Next we show that if $[t_i]_1 = [t'_i]_1$ for $i = 1,\ldots,m$,
then 
\[[f(u_1,\ldots,u_n,t_1,\ldots,t_m)]_1 = 
[f(u_1,\ldots,u_n,t'_1,\ldots,t'_m)]_1,\]
by which $[f]_1$ is
well-defined and we have a model $(S_1,([f]_1)_{f \in \Sigma_s})$.
\item We show this model satisfies the specification.
\item We show that no other model $(S,([f])_{f \in \Sigma_s})$ with 
$S = \{[t] \mid t \in \TT \}$ satisfies the specification.
\end{enumerate}
First we define $[t]_1 \in \str$ for any $t \in \TT$. Since elements
of $\str$ are functions from $\nat$ to $D$, a function $[t]_1 \in
\str$ is defined by defining $[t]_1(n)$ for every $n \in \nat$. Due to
the assumption of the theorem the TRS $\Obs(R_s) \cup R_d$ is terminating. 
According to the definition of a proper stream specification the TRS 
$R_s \cup R_d$ is orthogonal, and by the construction $\Obs$ the
TRS $\Obs(R_s) \cup R_d$ is orthogonal, too. So it is confluent. Since we 
assume termination, we conclude that every ground term of
sort $d$ has a unique normal form with respect to $\Obs(R_s) \cup
R_d$. 

Assume such a normal form of sort $d$ contains a symbol from $\Sigma_s \cup \{:\}$.
Choose such a symbol with minimal position, that is, closest to the 
root.  Since the term is of sort $d$, this symbol is not the root. 
Hence it has a parent. Due to minimality of position, this parent
is either $\hd$ or $\tl$. Due to the shape of the rules of
$\Obs(R_s)$, a rule of $\Obs(R_s)$ is applicable on this parent 
position, contradicting the normal form assumption. So the normal
form only contains symbols from $\Sigma_d$. Since it is also a
normal form with respect to $R_d$, such a normal form is an element
of $D$. Now for $t \in \TT$ and $n \in \nat$ we define
\begin{quote}
$[t]_1(n) = $ the normal form of $\hd(\tl^n(t))$ with respect to
$\Obs(R_s) \cup R_d$,  
\end{quote}
in this way defining $[t]_1 \in \str$.

\begin{lem}
\label{lemmodel}
Let $\Obs(R_s) \cup R_d$ be terminating.
Let $f \in \Sigma_s$ of type $d^n \times s^m \to s$. Let
$u_1,\ldots,u_n \in D$ and $t_1,\ldots,t_m,t'_1,\ldots,t'_m \in \TT$ 
satisfying $[t_i]_1 = [t'_i]_1$ for $i = 1,\ldots,m$. Then
\[ [f(u_1,\ldots,u_n,t_1,\ldots,t_m)]_1 = 
[f(u_1,\ldots,u_n,t'_1,\ldots,t'_m)]_1. \]
\end{lem}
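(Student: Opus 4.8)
The plan is to reduce everything to normal forms. Since $\Obs(R_s)\cup R_d$ is terminating and (as established before the lemma) orthogonal, hence confluent, every ground term $P$ has a unique normal form, which I write $\overline{P}$; moreover $P\to^* Q$ implies $\overline{P}=\overline{Q}$. By definition $[t]_1(n)=\overline{\hd(\tl^n(t))}$, and the hypothesis $[t_i]_1=[t'_i]_1$ says $\overline{\hd(\tl^n(t_i))}=\overline{\hd(\tl^n(t'_i))}$ for all $n$. The goal reduces to showing, for every $n$, that $\overline{\hd(\tl^n(f(u_1,\ldots,u_n,t_1,\ldots,t_m)))}=\overline{\hd(\tl^n(f(u_1,\ldots,u_n,t'_1,\ldots,t'_m)))}$.

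I would prove a stronger, symmetric statement by well-founded induction on the terminating relation $\to$ of $\Obs(R_s)\cup R_d$: let $s$ be any term of sort $d$ over $\Sigma_d\cup\Sigma_s\cup\{:,\hd,\tl\}$ all of whose variables are of sort $s$, and let $\rho,\rho'$ map these variables to normal forms in $\TT$ with $[\rho(\sigma)]_1=[\rho'(\sigma)]_1$ for each variable $\sigma$ of $s$; then $\overline{s\rho}=\overline{s\rho'}$. The lemma is the instance $s=\hd(\tl^n(f(u_1,\ldots,u_n,\sigma_1,\ldots,\sigma_m)))$ with $\rho(\sigma_i)=\overline{t_i}$ and $\rho'(\sigma_i)=\overline{t'_i}$ (normalizing first leaves $[\cdot]_1$ unchanged). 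The reason this generalization is needed is that the property ``$[\cdot]_1$-equal'' must survive the reductions performed: it is closed under taking heads, since $\overline{\hd(p)}=[p]_1(0)\in D$, and under taking tails, since $[\tl(p)]_1(k)=[p]_1(k+1)$, even though $\overline{\tl(p)}$ and $\overline{\tl(p')}$ are in general different \emph{terms}.

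For the induction, consider $s\rho$. If it is already a normal form, then by the argument preceding the lemma it lies in $D$ and contains no symbol of sort $s$; hence $s$ contains no variable and $s\rho=s=s\rho'$. Otherwise $s\rho$ has a redex, and since each variable of $s$ is a leaf and the substituted terms are normal forms, the special shape of the $\Obs(R_s)$-rules (apart from the root $\hd$ or $\tl$ and one $\Sigma_s$-symbol, all left-hand sides consist of variables) forces exactly two possibilities. In case (i) the redex pattern lies entirely within $s$: a step $s\to s_1$ gives $s\rho\to s_1\rho$ and, by the same rule at the same position, $s\rho'\to s_1\rho'$, so the induction hypothesis applied to the $\to$-smaller $s_1\rho$ yields $\overline{s_1\rho}=\overline{s_1\rho'}$, which equals $\overline{s\rho}$ and $\overline{s\rho'}$. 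In case (ii) the redex reaches into a substituted argument, which can only happen through a subterm $\hd(\sigma)$ or $\tl(\sigma)$ of $s$ (the only way the rule can need the root of $\rho(\sigma)$, whose root lies in $\{:\}\cup\Sigma_s$). If it is $\hd(\sigma)$, reduce $\hd(\rho(\sigma))$ and $\hd(\rho'(\sigma))$ to their common normal form $e=[\rho(\sigma)]_1(0)\in D$ and let $s_1$ be $s$ with that occurrence replaced by the closed term $e$; then $s\rho\to^+s_1\rho$, $s\rho'\to^+s_1\rho'$, and the induction hypothesis applies. If it is $\tl(\sigma)$, set $w=\overline{\tl(\rho(\sigma))}$, $w'=\overline{\tl(\rho'(\sigma))}$, which satisfy $[w]_1=[w']_1$; introduce a fresh sort-$s$ variable $\tau$, replace that occurrence of $\tl(\sigma)$ by $\tau$ to get $s_1$, and extend the substitutions by $\tau\mapsto w$ and $\tau\mapsto w'$, again obtaining $\to$-smaller, $[\cdot]_1$-compatible data to which the induction hypothesis applies. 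In every case $\overline{s\rho}=\overline{s\rho'}$, completing the induction and the lemma.

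The step I expect to be the main obstacle is pinning down the right generalized statement together with its invariant: substitutions into normal forms that are $[\cdot]_1$-equal and stay so under the head and tail observations, plus the case analysis showing that the shape of the $\Obs$-rules leaves no other way for a redex to consult a substituted argument. Once that invariant is in place the three cases are routine.
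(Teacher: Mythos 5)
Your proof is correct and, at the level of ingredients, matches the paper's: both arguments hinge on unique normal forms from termination plus orthogonality, a well-founded induction tied to the terminating reduction, the special shape of the $\Obs(R_s)$-rules (only a $\hd$ or $\tl$ directly above a $\Sigma_s\cup\{:\}$-symbol can consult a stream argument), and the fact that $[\cdot]_1$-equality is preserved under the head and tail observations. The execution, however, is genuinely different. The paper extends $[\cdot]_1$ to all ground terms over $\Sigma_s\cup\Sigma_d\cup\{:,\hd,\tl\}$ and proves a replacement claim (Claim 1: substituting occurrences of $t$ by $t'$ with $[t]_1=[t']_1$ preserves $[\hd(\cdot)]_1$) by induction on the strict part of the order ``$v\geq v'$ iff $v'$ is a subterm of some $v''$ with $v\to^* v''$''; it proceeds by actively firing the rule $\hd(f(\cdots))\to\cdots$ after normalizing the data arguments via the induction hypothesis, and then decomposes the resulting sort-$d$ term as $C[\hd(V_1),\ldots,\hd(V_k)]$ with $C$ over $\Sigma_d$ so as to apply the induction hypothesis to the $\hd(V_i)$; a second claim lifts this to all $\hd(\tl^i(\cdot))$, and the lemma follows by replacing $t_1,\ldots,t_m$ successively. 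You instead keep the differing subterms abstract as two substitutions into an open sort-$d$ term, normalize the substituted terms up front, and induct on plain $\to$, tracking an arbitrary redex: either its pattern lies in the skeleton (step both sides in parallel at the same position) or it consults a substituted argument, which the rule shapes confine to subterms $\hd(\sigma)$ or $\tl(\sigma)$, handled by collapsing to the common data value $[\rho(\sigma)]_1(0)$ resp.\ re-abstracting the normalized tail as a fresh variable. Your normal-form invariant buys a crisp exhaustiveness argument (no redex can sit inside a substituted term) and lets you avoid both the paper's $C[\hd(V_1),\ldots,\hd(V_k)]$ decomposition and its subterm-augmented induction order; the paper's version, in exchange, never needs to normalize stream arguments or re-introduce variables mid-induction.

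One detail you rely on tacitly should be made explicit, since your invariant depends on it: you need $w=\overline{\tl(\rho(\sigma))}$ to lie in $\TT$ (i.e., to be free of $\hd$ and $\tl$), and you need $\hd(\rho(\sigma))$ and $\tl(\rho(\sigma))$ to be reducible, so that your appeals to the induction hypothesis use strictly $\to$-smaller terms. Both facts are true but require the properness condition: a normal form in $\TT$ has root ``$:$'' or $f\in\Sigma_s$ with data arguments necessarily in $D$, so properness (exhaustiveness of the left-hand sides over $D$) guarantees a matching $\Obs(R_s)$-rule under $\hd$ or $\tl$; and the same minimal-position argument the paper gives for sort $d$ shows that ground normal forms of sort $s$ contain no $\hd$ or $\tl$, so $w\in\TT$. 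With these two observations supplied, your induction goes through as written.
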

\begin{proof}
First we extend the definition of $[\cdot]_1$ to all ground terms over
$\Sigma_s \cup \Sigma_d \cup \{:, \hd,\tl\}$. For ground terms $t$ of sort $s$ 
we define it by $[t]_1(n) = $ the normal form of $\hd(\tl^n(t))$ with respect to
$\Obs(R_s) \cup R_d$, and for ground terms $u$ of sort $d$ we define $[u]_1$ to 
be the normal form of $u$ with respect to $\Obs(R_s) \cup R_d$. We prove the 
following claim.
\begin{quote} 
{\bf Claim 1:} Let $[t]_1=[t']_1$ for $t,t' \in \TT$. Let $T$ be a ground term over
$\Sigma_s \cup \Sigma_d \cup \{:, \hd,\tl\}$ of sort $s$ 
containing $t$ as a subterm. Let $T'$ be obtained from $T$ by replacing zero
or more occurrences of the subterm $t$ by $t'$. Then
\[ [\hd(T)]_1 = [\hd(T')]_1. \]
\end{quote} 
Let $>$ be the well-founded order on ground terms being the strict
part of $\geq$ defined by 
\[ v \geq v' \desda v' \mbox{ is a subterm of $v''$ such that } 
v \to_{\Obs(R_s) \cup R_d}^* v''.\]
We prove the claim for every such term $\hd(T)$ by induction on $>$.

Claim 1 is trivial if $t=T$, so we may assume that 
$T = f(u_1,\ldots,u_n, t_1,\ldots,t_m)$ such that $t$ occurs in
$u_1,\ldots,u_n, t_1,\ldots,t_m$, 
and either $f \in \Sigma_s \cup \{:,\tl\}$, and
$T' = f(u'_1,\ldots,u'_n, t'_1,\ldots,t'_m)$. For every subterm of
$u_i$ of the shape $\hd(\cdots)$ we may apply the induction
hypothesis, yielding $[u_i]_1 = [u'_i]_1 = d_i$ for all $i$,
defining $d_i \in D$.

In case the root of $T$ is not $\tl$ we rewrite 
\[\hd(T) \to^*_{\Obs(R_s) \cup R_d}
\hd(f(d_1,\ldots,d_n, t_1,\ldots,t_m),\]
and then continue by the rule 
$\hd(f(\cdots)) \to \cdots$ in $\Obs(R_s)$, yielding a term $U$ of 
sort $d$. As $\hd$ is the only symbol of sort $d$ having an argument of sort $s$, the 
only way such a term can contain $t$ 
as a subterm is by $U = C[\hd(V_1),\ldots,\hd(V_k)]$ where $t$ is 
a subterm of some of the $V_i$ and $C$ is composed from $\Sigma_d$. Similarly, we obtain
\[\hd(T') \to^*_{\Obs(R_s) \cup R_d}
\hd(f(d_1,\ldots,d_n, t'_1,\ldots,t'_m) \to C[\hd(V'_1),\ldots,\hd(V'_k)],\]
for $V_i'$ obtained from $V_i$ by replacing zero ore more occurrences 
of $t$ by $t'$. By the induction hypothesis we obtain $[\hd(V_i)]_1 = [\hd(V'_i)]_1$. So
$[\hd(V_i)$ and $[\hd(V'_i)$ rewrite to the same normal form for all $i$. Hence
\[[\hd(T)]_1 = [C[\hd(V_1),\ldots,\hd(V_k)]]_1 = 
[C[\hd(V'_1),\ldots,\hd(V'_k)]]_1 = [\hd(T')]_1,\]
which we had to prove.

In case the root of $T$ is $\tl$ then write 
\[T = \tl^i(f(\cdots)) \to^*_{\Obs(R_s) \cup R_d} 
\tl^i(f(d_1,\ldots,d_n, t_1,\ldots,t_m)\]
for $f \in \Sigma_s \cup \{:\}$. This can be rewritten by the rule 
$\tl(f(\cdots)) \to \cdots$ in $\Obs(R_s)$, yielding $V$. Note that for applicability of
this rule it is essential that the arguments of $f$ in the left-hand side are variables, 
which was achieved by first applying the transformation $\pob$.

On the same position
using the same rule we can rewrite $T' \to_{\Obs(R_s)} V'$ for $V'$ obtained
from $V$  by replacing one or more occurrences of $t$ by $t'$. 
Applying the induction hypothesis gives
$[\hd(V)]_1 = [\hd(V')]_1$ yielding
\[ [\hd(T)]_1 = [\hd(V)]_1 = [\hd(V')]_1 = [\hd(T')]_1,\]
concluding the proof of Claim 1.
\begin{quote} 
{\bf Claim 2:} Let $[t]_1=[t']_1$ for $t,t' \in \TT$. Let $T$ be a ground term over
$\Sigma_s \cup \Sigma_d \cup \{:, \hd,\tl\}$ of sort $s$ 
containing $t$ as a subterm. Let $T'$ be obtained from $T$ by replacing one
or more occurrences of the subterm $t$ by $t'$. Then $[T]_1 = [T']_1$.
\end{quote} 
Claim 2 easily follows from Claim 1 and the observation
\[ [T]_1 = [T']_1 \desda \forall i \in \nat : [\hd(\tl^i(T))]_1 = [\hd(\tl^i(T'))]_1.  \]
Now the lemma follows by applying Claim 2 and replacing $t_i$ by $t'_i$ 
successively for $i=1,\ldots,m$.
\end{proof}

Define $S_1 = \{ [t]_1 \mid t \in \TT \}$. For any 
$f \in \Sigma_s$ of type $d^n \times s^m \to s$ for
$u_1,\ldots,u_n \in D$ and $t_1,\ldots,t_m,t'_1,\ldots,t'_m \in \TT$ 
we now define $[f]_1 : D^n \times S^m \to S$ by
\[ [f]_1(u_1,\ldots,u_n,[t_1],\ldots,[t_m]) = 
 [f(u_1,\ldots,u_n,t_1,\ldots,t_m)]_1; \]
Lemma \ref{lemmodel} implies that this is well-defined: the result is 
independent of the choice of the representants in $[t_i]_1$.
So $(S_1,([f]_1)_{f \in \Sigma_s})$ is a model.

Next we will prove that it satisfies the specification, and
essentially is the only one doing so. 

\begin{lem}
\label{lem2}
Let $\ell \to r \in R_s$ and let $\rho$ be a substitution.
Then 
\begin{enumerate}[$\bullet$]
\item there is a term $t$ such that 
$\hd(\ell \rho) \to^*_{\Obs(R_s)} t$ and
$\hd(r \rho) \to^*_{\Obs(R_s)} t$, and
\item there is a term $t$ such that 
$\tl(\ell \rho) \to^*_{\Obs(R_s)} t$ and
$\tl(r \rho) \to^*_{\Obs(R_s)} t$.
\end{enumerate}
\end{lem}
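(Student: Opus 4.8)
The plan is to show, for each equation $\ell = r$ in $R_s$, that a single application of the corresponding $\hd$- or $\tl$-rule of $\Obs(R_s)$ to $\hd(\ell\rho)$ (resp. $\tl(\ell\rho)$) lands on a term that reduces back to $\hd(r\rho)$ (resp. $\tl(r\rho)$) using only the two cancellation rules $\hd(x:\sigma)\to x$ and $\tl(x:\sigma)\to\sigma$. In each case the required common reduct $t$ can then be taken to be the right-hand observation itself, or a shared reduct of it.

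First I would recall the precise shape of the rules involved. By properness $\ell = f(u_1,\ldots,u_n,t_1,\ldots,t_m)$ with each $t_i$ either a stream variable or of the form $x_i : \sigma_i$. Writing $\mu$ for the substitution sending $x_i \mapsto \hd(\sigma_i)$ and $\sigma_i \mapsto \tl(\sigma_i)$ at the unfolded positions and the identity elsewhere, the rule of $\pob(R_s)$ corresponding to $\ell \to r$ is $\ell' \to r'$ with $\ell' = f(u_1,\ldots,u_n,\sigma_1,\ldots,\sigma_m)$, whose stream arguments are now all variables, and $r' = r\mu$. From $\ell'$ the construction $\Obs$ produces, depending on $\rt(r')$, either the pair $\hd(\ell') \to v$, $\tl(\ell') \to w$ when $r' = v:w$, or the pair $\hd(\ell') \to \hd(r')$, $\tl(\ell') \to \tl(r')$ when $\rt(r') \neq \; : \;$.

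Next I would exhibit the matching substitution. Define $\theta$ to agree with $\rho$ on all data variables and on the stream variables occurring at non-unfolded positions, and to set $\theta(\sigma_i) = \rho(x_i) : \rho(\sigma_i)$ at each unfolded position $i$. Since the stream arguments of $\ell'$ are exactly these variables, one checks $\ell'\theta = \ell\rho$, so the relevant $\Obs$-rule indeed applies to $\hd(\ell\rho)$ and to $\tl(\ell\rho)$; this is precisely the point of first passing through $\pob$. The crucial computation is that the composite $\mu\theta$ sends each variable to a term reducing to its $\rho$-value by cancellation: at an unfolded position $\hd(\sigma_i)\theta = \hd(\rho(x_i):\rho(\sigma_i)) \to \rho(x_i)$ and $\tl(\sigma_i)\theta = \tl(\rho(x_i):\rho(\sigma_i)) \to \rho(\sigma_i)$, while on every other variable $\mu$ is the identity and $\theta$ already equals $\rho$. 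Hence $r'\theta = r(\mu\theta) \to^*_{\Obs(R_s)} r\rho$ by rewriting inside the context $r$.

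Finally I would assemble the two cases. If $\rt(r') \neq \; : \;$, then $\hd(\ell\rho) = \hd(\ell'\theta) \to \hd(r'\theta) \to^* \hd(r\rho)$ and likewise $\tl(\ell\rho) \to^* \tl(r\rho)$, so $t = \hd(r\rho)$ and $t = \tl(r\rho)$ serve. If $r' = v:w$, then since $\mu$ never creates a leading $:$ out of a variable, $r$ itself has root $:$, say $r = r_0 : r_1$ with $v = r_0\mu$ and $w = r_1\mu$; then $\hd(\ell\rho) \to v\theta = r_0(\mu\theta) \to^* r_0\rho$, whereas $\hd(r\rho) = \hd(r_0\rho : r_1\rho) \to r_0\rho$, so both meet at $r_0\rho$, and symmetrically $\tl(\ell\rho)$ and $\tl(r\rho)$ meet at $r_1\rho$. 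The main obstacle is the bookkeeping of the three substitutions $\rho$, $\mu$, $\theta$: verifying $r'\theta = r(\mu\theta)$ and the reduction of $\mu\theta$ to $\rho$ on variables. Note that only rules of $\Obs(R_s)$, including the two cancellation rules, are used, so the reductions stay within $\to^*_{\Obs(R_s)}$ without any appeal to $R_d$, as the statement demands.
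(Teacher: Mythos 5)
Your proof is correct and takes essentially the same route as the paper: the paper likewise defines a substitution $\rho'$ (your $\theta$) with $\sigma\rho' = x\rho : \sigma\rho$ at the unfolded positions, observes $\hd(\ell\rho) = \ell'\rho'$ for the corresponding rule $\ell' \to r'$ of $\Obs(R_s)$, and closes the diagram using the cancellation rules $\hd(x:\sigma)\to x$ and $\tl(x:\sigma)\to\sigma$. Your explicit case split on $\rt(r')$ and the verification that $\mu\theta$ reduces to $\rho$ variable-wise merely spell out details the paper's terser proof leaves implicit.
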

\begin{proof}
Let $f$ be the root of $\ell$. Define $\rho'$ by $\sigma \rho' =
x \rho : \sigma \rho$ for every argument of the shape $x : \sigma$
of $f$ in $\ell$, and $\rho'$ coincides with $\rho$ on all other
variables. Then $\hd(\ell \rho) = \ell' \rho'$ for some rule in
$\ell' \to r'$ in $\Obs(R_s)$. Now a common reduct $t$ of 
$r' \rho'$ and $\hd(r \rho)$ is obtained by applying the rule
$\hd(x: \sigma) \to x$ zero or more times. This yields
$\hd(\ell \rho) = \ell' \rho' \to_{\Obs(R_s)} r' \rho'
\to^*_{\Obs(R_s)} t$ and $\hd(r \rho) \to^*_{\Obs(R_s)} t$. The
argument for $\tl(\ell \rho)$ and $\tl(r \rho)$ is similar.
\end{proof}

\begin{lem}
\label{lem3}
The model $(S_1,([f]_1)_{f \in \Sigma_s})$ satisfies the
specification $(\Sigma_d,\Sigma_s,R_d,R_s)$.
\end{lem}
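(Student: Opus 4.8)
The plan is to show that for every equation $\ell \to r \in R_s$ and every ground substitution $\rho$, the two ground terms $\ell\rho$ and $r\rho$ have the same interpretation under $[\cdot]_1$, i.e.\ $[\ell\rho]_1 = [r\rho]_1$. By the characterization already used in Claim~2 of Lemma~\ref{lemmodel}, namely
\[ [T]_1 = [T']_1 \desda \forall i \in \nat : [\hd(\tl^i(T))]_1 = [\hd(\tl^i(T'))]_1, \]
it suffices to prove $[\hd(\tl^i(\ell\rho))]_1 = [\hd(\tl^i(r\rho))]_1$ for every $i \in \nat$. The point of the interpretation $[\cdot]_1$ is that it is \emph{defined} as the normal form with respect to $\Obs(R_s) \cup R_d$, so two terms get the same interpretation as soon as we can rewrite the relevant observations $\hd(\tl^i(\cdot))$ to a common reduct; confluence (established before the statement) then forces equal normal forms.

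First I would reduce the goal to the case $i=0$. The key observation is that Lemma~\ref{lem2} already does exactly the job for a single application of $\hd$ and of $\tl$: for the given rule $\ell \to r$ and substitution $\rho$, it produces a term $t$ with $\hd(\ell\rho) \to^*_{\Obs(R_s)} t$ and $\hd(r\rho) \to^*_{\Obs(R_s)} t$, and likewise a common reduct for $\tl(\ell\rho)$ and $\tl(r\rho)$. From the $\tl$-part of Lemma~\ref{lem2} I get that $\tl(\ell\rho)$ and $\tl(r\rho)$ rewrite to a common term, hence by confluence they have the same normal form; more usefully, I want to propagate the equality through iterated tails. So I would argue by induction on $i$: the base case $i=0$ is the $\hd$-part of Lemma~\ref{lem2}, giving a common reduct $t$ of $\hd(\ell\rho)$ and $\hd(r\rho)$, whence by confluence $[\hd(\ell\rho)]_1 = [\hd(r\rho)]_1$. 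For the step, the $\tl$-part of Lemma~\ref{lem2} gives a common reduct of $\tl(\ell\rho)$ and $\tl(r\rho)$, and since $\tl^{i+1} = \tl^i \circ \tl$, I can push the outer $\tl$ inward and reduce $[\hd(\tl^{i+1}(\ell\rho))]_1 = [\hd(\tl^{i+1}(r\rho))]_1$ to an instance of the same statement applied to the common-reduct terms.

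The cleanest route, and the one I expect to use, is to invoke Claim~2 of Lemma~\ref{lemmodel} directly on the common reducts. From the $\tl$-part of Lemma~\ref{lem2} there is a term $t$ with $\tl(\ell\rho) \to^*_{\Obs(R_s)} t \; {}^*\!\!\leftarrow \tl(r\rho)$; since rewriting by $\Obs(R_s)\cup R_d$ preserves $[\cdot]_1$ (each rewrite step replaces a redex by a term with identical observational normal forms, which is immediate from confluence and the definition of $[\cdot]_1$), it follows that $[\tl(\ell\rho)]_1 = [t]_1 = [\tl(r\rho)]_1$, and similarly $[\hd(\ell\rho)]_1 = [\hd(r\rho)]_1$ from the $\hd$-part. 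Iterating the $\tl$-equality gives $[\tl^i(\ell\rho)]_1 = [\tl^i(r\rho)]_1$ for all $i$, and then applying the $\hd$-equality one level down yields $[\hd(\tl^i(\ell\rho))]_1 = [\hd(\tl^i(r\rho))]_1$ for all $i$, which is the desired conclusion.

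The main obstacle is making precise and justifying the principle that a single rewrite step by $\Obs(R_s) \cup R_d$ preserves $[\cdot]_1$, and in particular that equality of the outer $\tl$-observations lifts to equality under arbitrarily many further tails. This is where I expect to lean on Claim~2 of Lemma~\ref{lemmodel}: once I know $\tl(\ell\rho)$ and $\tl(r\rho)$ have a common reduct, I treat that common reduct as the intermediary and use the already-proved invariance of $[\cdot]_1$ under subterm replacement by $[\cdot]_1$-equal terms to climb through the iterated tails, rather than re-proving termination- or confluence-based facts by hand. Beyond that, the argument is essentially bookkeeping: everything nontrivial has been front-loaded into Lemma~\ref{lem2} and the confluence of $\Obs(R_s) \cup R_d$.
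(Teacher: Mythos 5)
Your proposal is correct and follows essentially the same route as the paper's proof: satisfaction is reduced to equality, for every $i$, of the unique $\Obs(R_s) \cup R_d$-normal forms of $\hd(\tl^i(\ell\rho))$ and $\hd(\tl^i(r\rho))$, which follows from the common reducts supplied by Lemma~\ref{lem2} (the $\hd$-part for $i=0$, the $\tl$-part propagated through the context $\hd(\tl^{i-1}(\cdot))$ for $i>0$) together with confluence, termination, and closure of rewriting under contexts. The only blemish is your stated reliance on Claim~2 of Lemma~\ref{lemmodel}, which is both unnecessary and strictly inapplicable here (it requires the replaced subterms to lie in $\TT$, whereas $\tl(\ell\rho)$ contains the symbol $\tl$); however, the direct argument you give in the same paragraph---a rewrite step preserves $[\cdot]_1$ by uniqueness of normal forms, so common reducts force equal interpretations---is exactly what the paper uses and carries the proof on its own.
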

\begin{proof}
We have to prove that 
$[\ell \rho]_1(i) = [r \rho]_1(i)$ for every equation $\ell = r$
in $R_s$, every ground substitution $\rho$ and every $i \in \nat$. 
By definition $[\ell \rho]_1(i)$ is the unique normal form with
respect to $\Obs(R_s) \cup R_d$ of $\hd(\tl^i(\ell \rho))$, and
$[r \rho]_1(i)$ is the similar normal form of $\hd(\tl^i(r \rho))$.
The terms $\hd(\tl^i(\ell \rho))$ and $\hd(\tl^i(r \rho))$ have a common 
$\Obs(R_s)$-reduct. For $i=0$ this follows from the first part of 
Lemma \ref{lem2}, for $i>0$ this follows from the second part of Lemma \ref{lem2}.
As they have a common reduct, their unique normal forms $[\ell \rho]_1(i)$
and  $[r \rho]_1(i)$ with respect to $\Obs(R_s) \cup R_d$ are equal, which we had to
prove.
\end{proof}

For concluding the proof of Theorem \ref{main} we have to prove
that $(S_1,([f]_1)_{f \in \Sigma_s})$ is the only model satisfying
the specification $(\Sigma_d,\Sigma_s,R_d,R_s)$ and $S = \{[t] \mid
t \in \TT \}$. This follows from the following lemma.

\begin{lem}
\label{lem4}
Let $(S,([f])_{f \in \Sigma_s})$ be any model satisfying 
$(\Sigma_d,\Sigma_s,R_d,R_s)$, and $t \in \TT$.
Then $[t] = [t]_1$.
\end{lem}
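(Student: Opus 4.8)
The plan is to show that any model $(S,([f])_{f \in \Sigma_s})$ satisfying the specification must assign to each ground term $t \in \TT$ exactly the stream $[t]_1$ obtained from the normal forms under $\Obs(R_s) \cup R_d$. Since $[t](n) \in D$ and $[t]_1(n) \in D$ for every $n$, it suffices to prove that $[t](n) = [t]_1(n)$ for all $n \in \nat$. Recall that $[t]_1(n)$ is by definition the unique $\Obs(R_s) \cup R_d$-normal form of $\hd(\tl^n(t))$. So the core of the argument is to connect the rewrite steps of $\Obs(R_s) \cup R_d$ on terms like $\hd(\tl^n(t))$ with the interpretation $[\cdot]$ in the given model.

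The key step is to verify that the given interpretation $[\cdot]$ is a \emph{congruence for $\Obs(R_s) \cup R_d$ reductions} at the level of observed data elements, in the following sense: whenever $v \to_{\Obs(R_s) \cup R_d} w$ for ground terms $v,w$ of sort $d$ (or one extends $[\cdot]$ to such terms via $\hd$ and $\tl$ as in the proof of Lemma \ref{lemmodel}), the data value assigned by the model is preserved. The plan is to establish this rule by rule. The two rules $\hd(x:\sigma)\to x$ and $\tl(x:\sigma)\to\sigma$ are matched exactly by the semantic clauses $[u:t](0)=[u]$ and $[u:t](i)=[t](i-1)$ in the definition of the stream interpretation, so they are sound for any model. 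For the rules of $\Obs(R_s)$ coming from an equation $\ell = r$ of $R_s$, soundness follows because the model \emph{satisfies} the specification: by Lemma \ref{lem2} the terms $\hd(\ell\rho)$ and $\hd(r\rho)$ (respectively $\tl(\ell\rho)$ and $\tl(r\rho)$) have a common $\Obs(R_s)$-reduct, and since $[\ell\rho] = [r\rho]$ in the model, the observations agree. Finally the $R_d$-rules are sound because data is interpreted by its $R_d$-normal form. I would package this as a claim that $v \to_{\Obs(R_s)\cup R_d} w$ implies the two ground terms have equal interpretation, proved by a case analysis on which rule is applied together with closure under contexts and substitutions.

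Granting that claim, the lemma follows quickly: $\hd(\tl^n(t)) \to^*_{\Obs(R_s)\cup R_d} [t]_1(n)$ by definition of $[t]_1$, and by the claim every such reduction step preserves the model's interpretation, so the observed data element $[t](n)$ computed in the model equals the data element $[t]_1(n)$. As $[t]_1(n) \in D$ is a normal form interpreted as itself, we conclude $[t](n) = [t]_1(n)$ for every $n$, hence $[t] = [t]_1$.

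I expect the main obstacle to be the bookkeeping in establishing soundness of each $\Obs(R_s)$-rule relative to an \emph{arbitrary} satisfying model, since the $\Obs$ rules were derived from $R_s$ via the intermediate transformation $\pob$ that replaced a pattern $x:\sigma$ by $\sigma$ with $\hd(\sigma)$, $\tl(\sigma)$ substituted in the right-hand side. The care needed is to match the substitution $\rho'$ (as introduced in Lemma \ref{lem2}, with $\sigma\rho' = x\rho : \sigma\rho$) with the semantic equation $[\ell\rho] = [r\rho]$, so that the detour through $\pob$ does not break the correspondence between rewriting and interpretation. Once this matching is handled exactly as in Lemma \ref{lem2}, the rest is a routine induction on the reduction length.
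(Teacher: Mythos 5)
Your overall skeleton matches the paper's: extend the given model with semantic interpretations of $\hd$ and $\tl$, show that ground $\Obs(R_s)\cup R_d$-steps preserve the interpretation, and conclude $[t](n) = [\hd(\tl^n(t))] = [t]_1(n)$ along the normalization. But your central claim --- that each rule of $\Obs(R_s)$ is sound for an \emph{arbitrary} satisfying model, ``established rule by rule together with closure under contexts and substitutions'' --- has a genuine gap, and it is exactly the point where the paper has to work. Satisfaction of the specification only guarantees $[\ell\rho]=[r\rho]$ for ground substitutions $\rho$ into $\TT$, i.e.\ into terms over the original signature. An $\Obs(R_s)$-rule fired under an arbitrary ground substitution over the extended signature corresponds (via $\pob$ and the third equation of $\EE$) to an $R_s$-instance whose substitution values contain $\hd$ and $\tl$; semantically these are streams of the form $[\tl](s)$, and nothing guarantees they lie in $S$. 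Since $[f]$ is only given as a function $D^n\times S^m\to S$, the interpretation of such an instance is not even defined, and after extending $[f]$ by arbitrary values outside $S$ the corresponding equation instances may simply fail: $S$ need not be closed under the tail operation (the paper stresses in Section \ref{secconcl} that for the $\fib$ model the tail of $\fib$ is \emph{not} in $S$), and Example \ref{exgr} shows that a model's behaviour outside $\{[t] \mid t\in\TT\}$ is genuinely not determined. Note also that Lemma \ref{lemmodel} extends the syntactic interpretation $[\cdot]_1$, not the model's $[\cdot]$, so ``as in Lemma \ref{lemmodel}'' does not supply the needed extension. Finally, invoking Lemma \ref{lem2} is circular here: it provides a common $\Obs(R_s)$-reduct, a purely syntactic joinability fact, whereas the semantic soundness of those very reduction steps is what you are trying to prove.

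The paper circumvents all of this with an idea absent from your plan: it restricts to \emph{innermost} reductions. It extends $[f]$ by arbitrary junk values on arguments in $\str\setminus S$ (harmless, since satisfaction is only required for substitutions into $\TT$), checks that the extended model satisfies $\EE$, and then argues that any innermost ground step of $\Obs(R_s)$ is either an application of the first two rules of $\EE$ or factors as $\to_\EE^* \cdot \to_{R_s} \cdot \to_\EE^*$, where the innermost requirement ensures the redex of the $\to_{R_s}$ step contains no $\hd$ or $\tl$ and hence lies in $\TT$ --- so that satisfaction of the specification genuinely applies to it. Because $\Obs(R_s)\cup R_d$ is terminating and orthogonal, the \emph{innermost} normalization of $\hd(\tl^i(t))$ still reaches the unique normal form $[t]_1(i)$, so nothing is lost by the restriction. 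Without this restriction (or some substitute argument showing that $S$ is closed under $[\tl]$ and that the equations hold under the extended substitutions), your ``routine induction on the reduction length'' fails at its first non-trivial case.
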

\begin{proof}
By definition in the model for $u \in D$ and $s \in S$  we have
\[ ([:](u,s))(0) = u, \;\; ([:](u,s))(i) = s(i-1) \mbox{ for $i > 0$}. \]
In the original stream specification the symbols $\hd, \tl$ do not
occur, for these fresh symbols we now define functions $[\hd]$ and $[\tl]$ 
on streams $s$ by
\[ [\hd](s) = s(0), \;\; ([\tl](s))(i) = s(i+1) \mbox{ for $i \geq 0$}. \]
If $S \neq \str$ then it is not clear whether $[\tl](s) \in S$ for 
every $s \in S$. Therefore we extend $S$ to $\str$ and define $[f](\cdots)$ to be
any arbitrary value if at least one argument is in $\str \setminus S$; note
that for the model satisfying the specification we only required $[\ell \rho] =
[r \rho]$ for ground substitutions to $\TT$ by which these junk values do not
play a role. 

Due to the definitions of $[:]$, $[\hd]$ and $[\tl]$ 
this extended model satisfies the equations
\[ \EE = \left\{ \begin{array}{rcl}
\hd(x:\sigma) & = & x \\
\tl(x:\sigma) & = & \sigma \\
\sigma & = & \hd(\sigma) : \tl(\sigma)
\end{array} \right. \]
that is, for $\rho$ mapping $x$ to any term of sort $d$ and $\sigma$ to any term
of sort $s$ we have $[\ell \rho] = [r \rho]$ for every $\ell \to r \in \EE$.
From the definition of $\Obs(R_s)$ it is easily checked that 
any innermost step $t \to_{\Obs(R_s)} t'$ on a ground term $t$ is either an 
application of one of the first two rules of $\EE$, or it is of the shape
\[ t \to_\EE^* \cdot \to_{R_s} \cdot \to_\EE^* t'\]
where due to the innermost requirement the redex of the $\to_{R_s}$ step does not
contain the symbols $\hd$ or $\tl$ so is in $\TT$. Since the model is assumed to
satisfy the specification $(\Sigma_d,\Sigma_s,R_d,R_s)$, we conclude that $[t] =
[t']$ for every innermost ground step $t \to_{\Obs(R_s)} t'$. 

For the lemma we have to prove that 
$[t](i) = [t]_1(i)$ for every $i \in \nat$. By definition $[t]_1(i)$ is the
normal form with respect to $\Obs(R_s) \cup R_d$ of $\hd(\tl^i(t))$. Now consider 
an innermost $\Obs(R_s) \cup R_d$-reduction of $\hd(\tl^i(t))$ to $[t]_1(i)$.
By the above observation and the definitions of $[\hd]$ and $[\tl]$ we conclude that 
\[ [t](i) = [\hd(\tl^i(t))] = [[t]_1(i)] = [t]_1(i), \]
the last step since $[t]_1(i) \in D$. This concludes the proof, both of the lemma
and Theorem \ref{main}.
\end{proof}

We conclude this section by an example of a well-defined proper stream 
specification that is not productive.

\begin{example}
\label{exwdnp}
Choose $\Sigma_s = \{c,f,g\}$, $\Sigma_d = \{0,1\}$, $R_d = \emptyset$, and $R_s$
consists of the following equations:
\[ \begin{array}{rcl}
c & = & 1 : c\\
f(x:\sigma) & = & g(f(\sigma)) \\
g(x:\sigma) & = & c.
\end{array} \]
This is a valid proper stream specification for which $\Obs(R_s)$ is terminating, as
can be shown by AProVE \cite{aprove} or TTT2 \cite{ttt2}. 
Hence by Theorem \ref{main} it is well-defined.
So the ground term $f(c)$ has a unique interpretation: the stream
only consisting of 1's. However, $f(c)$ is not productive, as it only reduces to terms 
having $f$ or $g$ on top.

So the TRS $R_s$ uniquely defines $f(c)$, but is not suitable to compute its 
interpretation.
\end{example}

\section{Implementation}
\label{sectool}

In {\tt http://www.win.tue.nl/\~{}hzantema/str.zip} we offer a prototype
implementation automating proving well-definedness of boolean stream 
specifications by the approach we proposed. The main 
feature is to generate the observational variant for any given boolean stream
specification. Being only a prototype, the focus is on testing simple examples as 
they occur in this paper. The default version runs under Windows with a graphical user 
interface and provides the following features:
\begin{enumerate}[$\bullet$]
\item Boolean stream specifications can be entered, loaded, edited and stored. 
The format is the same as given here, with the only difference that for the 
operator ':' a prefix notation is chosen, in order to be consistent with the 
user defined symbols.
\item By clicking a button the observational variant of the current stream 
specification is tried to be created. In doing so, all requirements of the 
definition of
stream specification are checked. If they are not fulfilled, an appropriate error
message is shown. 
\item If all requirements hold, then the resulting observational variant is shown 
on the screen by which it can be 
entered by cut and paste in a termination tool. Alternatively, it can be stored.
\item Alternatively, the stream specification can be transformed to Circ format.
This occurs in two variants: 
\begin{enumerate}[$\bullet$]
\item a basic variant in which the Circ proof goal should be
added manually, and 
\item a version generating two copies of the specification and
generating goals for these to be equivalent.
\end{enumerate}
Again it is shown on the screen with
cut and paste facility, or the result can be stored, both for entering the result in
the tool Circ.
\item A term can be entered, and an initial part of the stream represented by this 
term can be computed. 
\item For unary symbols the process of unfolding as described in Section \ref{unf}
is supported.
\item Several stream specifications, including the Fibonacci stream (the variant
as we will present in Example \ref{fibex}), the Thue-Morse stream (Example
\ref{tm}), the paper folding stream (Example \ref{pap} below) 
and the Kolakoski stream (Example \ref{kol}) are predefined. 
For all of these examples termination of the
observational variant can be proved fully automatically both by
AProVE \cite{aprove} and TTT2 \cite{ttt2}, proving
well-definedness of the given stream specification.
\end{enumerate}
Apart from this graphical Windows version there is also a command line version to
be run under Linux. This provides the main facility, that is, generates the 
observational variant in term rewriting format in case the syntax 
is correct, and generates an appropriate error message otherwise.

None of the actions require substantial computation: for all features the result 
shows up instantaneously. On the other hand, proving termination of a resulting
observational variant by a tool like AProVE or TTT2 may take some computation time, 
although never more 
than a few seconds for the given examples. This was one of the objectives of the
project:
the transformation itself should be simple and direct, while the real work to be done
makes use of the power of current termination provers.

We conclude this section by an interesting stream specification that can be
dealt with by our implementation. Just like in the introduction for $\fib$,
and later in Section \ref{secex} we also show a turtle visualization. These
and others are made by a few lines of code traversing a boolean array containing the
first $N$ elements of a stream. These first $N$ elements are determined by
executing outermost rewriting with respect to $R_s$ starting in the 
constant representing the intended stream, until the first $N$ elements have
been computed.

\begin{example}
\label{pap}
Start by a ribbon of paper. Fold it half lengthwise. Next fold the folded ribbon
half lengthwise again, and repeat this a number of times, every time folding
in the same direction. Now by unfolding the ribbon one sees a sequence of
top-folds and valley-folds, and the question is what is the pattern in this
sequence. A first observation is that this pattern is the first half of the
pattern obtained when folding once more, so every such sequence is a proper
prefix of the next sequence. As a consequence, we can take the limit, obtaining 
a boolean stream $P$, called the {\em paper folding stream}, in which top 
folds and valley folds are represented by 0 and 1, respectively. 

Imagine what happens if we do an extra fold. Then all existing folds remain,
but between every two consecutive folds a new fold is created. These new
folds are alternately top folds and valley folds. So the effect of folding
once more is that the new sequence is the $\zip$ of $010101\cdots$ and the
old sequence. Taking the limit we obtain 
\[ P = \zip(\alt,P), \]
where for $\zip$ and $\alt$ we have the equations
\[ \zip(x:\sigma,\tau) = x : \zip(\tau,\sigma), \;\; 
\alt = 0:1:\alt. \] 

One may wonder whether $P$ is already fully defined by these three equations
for $P$, $\zip$ and $\alt$. It is, by Theorem \ref{main}, since the equations
form a proper stream
specification $R_s$ for which termination of $\Obs(R_s)$ is easily proved by 
TTT2 or AProVE.

Paper folding and many of its properties is folklore; we found this
characterization of $P$ independently.
Turtle visualization of $P$ is of particular interest, since the result is
not just a visualization, but also the shape obtained if the ribbon is not
fully unfolded, but only unfolded until the angles given as parameter of the 
turtle visualization. We only consider the case where the angles for 0 (top
fold) and 1 (valley fold) are equal. In case this angle is 90 degrees, then
the result is called the {\em dragon
curve}; this curve touches itself, but does not intersect itself. Pictures
are easily found on the Internet. When choosing turtle angles of less than 90
degrees, that is, the remaining paper fold is greater than 90 degrees, then
the curve neither touches nor intersects itself. Doing this for 87
degrees and doing 15 folds, this yields the following turtle visualization 
of the first $2^{15} - 1 = 32767$ elements of the stream $P$:

\includegraphics[height=3.6in,width=5.4in,angle=0]{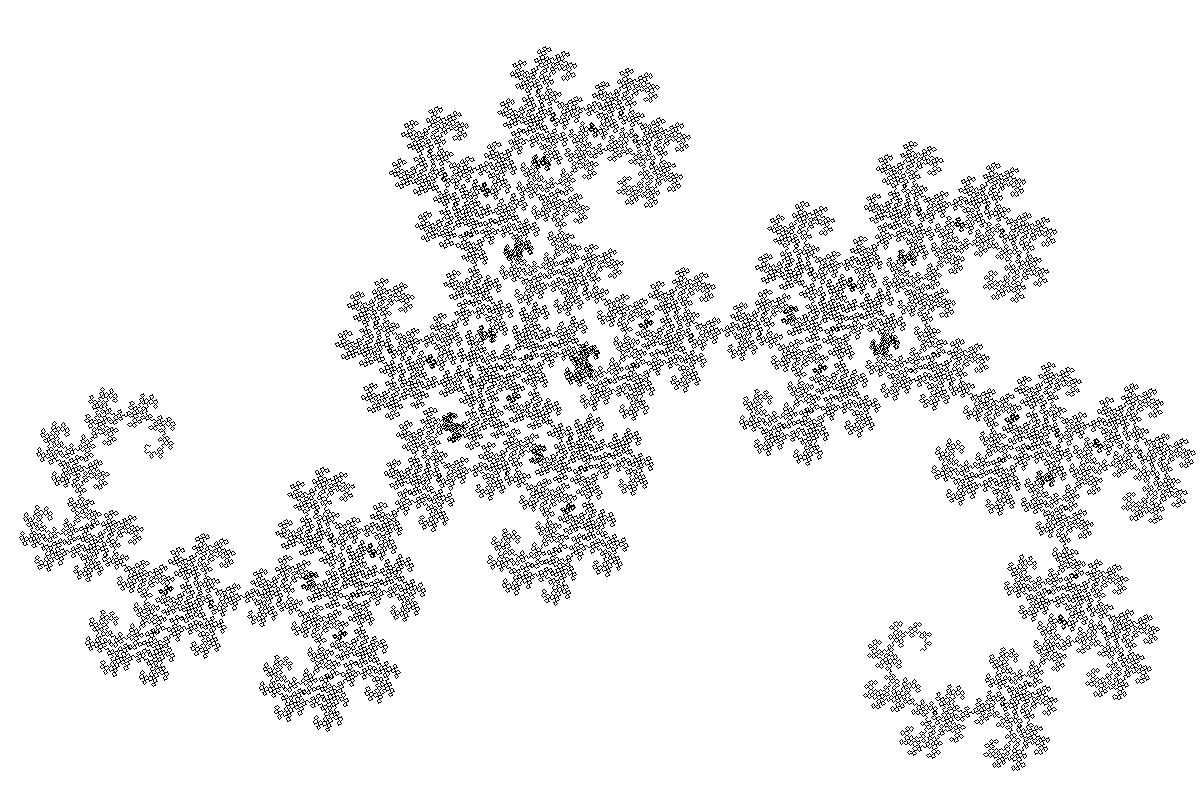}
\end{example}

\section{Data Independent Stream Functions}
\label{secdisf}

The reason that in Theorem \ref{main} we have to restrict to models satisfying
$S = \{[t] \mid t \in \TT \}$, as we saw in Example \ref{exgr}, is in the fact
that computations may be guarded by data elements in left-hand sides of equations. 
Next we show that we also get well-definedness for stream functions defined 
on all streams in case the left-hand sides of the equations do not contain data 
elements.

\begin{thm}
\label{main2}
Let $(\Sigma_d,\Sigma_s,R_d,R_s)$ be a proper stream specification for
which the TRS $\Obs(R_s) \cup R_d$ is terminating and the only subterms of 
left-hand sides of $R_s$ of sort $d$ are variables.  Then the stream
specification admits a unique model $(S,([f])_{f \in \Sigma_s})$
satisfying $S = \str$. 
\end{thm}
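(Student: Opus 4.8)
The plan is to reuse essentially all of the machinery built for Theorem~\ref{main}, adding exactly one new idea that exploits data independence: since no left-hand side of $R_s$ branches on a data value, the \emph{position} in the input streams from which the $k$-th output symbol of a stream function is read is fixed independently of the actual data. This lets me compute, once and for all, a finite ``template'' for each output position and then substitute arbitrary stream data at the very end, thereby extending each $[f]$ from $\{[t]_1 \mid t \in \TT\}$ to all of $\str$. Note that, unlike in Theorem~\ref{main}, no quotient is involved, since every stream is its own canonical representative.

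Concretely, I would first enlarge the signature by fresh \emph{inert} constants $z_1,\dots,z_m$ of sort $s$ carrying no rules, so that $\hd(\tl^{j}(z_i))$ is always a normal form. Adding inert constants to a terminating, orthogonal TRS keeps it terminating, orthogonal, and hence confluent, so every ground term still has a unique normal form, and the argument from the proof of Theorem~\ref{main} still shows that a sort-$d$ normal form lies in $D$ unless it is stuck on one of the new constants. Fix $f \in \Sigma_s$ of type $d^n \times s^m \to s$ and $u_1,\dots,u_n \in D$, and for each $k \in \nat$ let $U_{f,\vec u,k}$ be the $\Obs(R_s)\cup R_d$-normal form of $\hd(\tl^{k}(f(u_1,\dots,u_n,z_1,\dots,z_m)))$. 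Because every left-hand side of $\Obs(R_s)$ has only variables below its root $\hd$ or $\tl$ and its single $\Sigma_s \cup \{:\}$ symbol (the stream positions are variables after $\pob$, and the data positions are variables by the data-independence hypothesis), the reduction can only get stuck on a $z_i$ inside a subterm $\hd(\tl^{j}(z_i))$. An induction exactly like the one classifying sort-$d$ normal forms in Theorem~\ref{main} then yields $U_{f,\vec u,k}=C[\hd(\tl^{j_1}(z_{i_1})),\dots,\hd(\tl^{j_p}(z_{i_p}))]$ for a context $C$ built from $\Sigma_d$ only. I then \emph{define}
\[ [f](u_1,\dots,u_n,s_1,\dots,s_m)(k)=\nf\big(C[\,s_{i_1}(j_1),\dots,s_{i_p}(j_p)\,]\big)\in D \]
for arbitrary $s_1,\dots,s_m \in \str$, giving a total $[f]:D^n\times\str^m\to\str$ and a model on $S=\str$.

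Next I would show this model satisfies the specification and is the only one with $S=\str$. Satisfaction is Lemmas~\ref{lem2} and~\ref{lem3} read off with the $z_i$ in place of ground stream terms: for $\ell=r\in R_s$ the terms $\hd(\tl^{i}(\ell\rho))$ and $\hd(\tl^{i}(r\rho))$ share an $\Obs(R_s)$-reduct, hence the same template, hence the same value after substituting the stream data, so $[\ell\hat\rho]=[r\hat\rho]$ for every assignment $\hat\rho$ of the stream variables to arbitrary streams. For uniqueness I would adapt Lemma~\ref{lem4}: given any model $(\str,([f]'))$ satisfying the specification, introduce a name $c_s$ for each $s\in\str$ interpreted as $s$, add the rules $\hd(c_s)\to s(0)$ and $\tl(c_s)\to c_{\tl(s)}$, and observe that these together with $\EE$ are validated by the semantic $[\hd]'$ and $[\tl]'$. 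Every innermost step then preserves $[\cdot]'$; the only new point beyond Lemma~\ref{lem4} is that an inward-pushed $R_s$-redex may now carry arbitrary streams in its argument positions, but satisfaction for such substitutions is exactly what $S=\str$ provides. Reducing $\hd(\tl^{k}(f(\vec u,c_{\vec s})))$ to its normal form, which is the same $D$-value produced by the template, forces $[f]'(\vec u,\vec s)(k)=[f](\vec u,\vec s)(k)$.

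The step I expect to be the real obstacle — and the place where data independence is indispensable — is guaranteeing that the per-position computation always reaches a value, i.e.\ that $U_{f,\vec u,k}$ exists and that the normal form in the uniqueness argument is actually reached. This is precisely what fails in Example~\ref{exgr}, where the data-guarded rules make $\hd(\tl^{k}(f(c_{\zeros})))$ loop, so no template exists and the interpretation on $\str$ is not pinned down. Under data independence the reduction computing $U_{f,\vec u,k}$ runs entirely inside the terminating system $\Obs(R_s)\cup R_d$ with inert constants, and substituting concrete data afterwards cannot re-activate any $\Sigma_s$-rule, there being no data-guarded rule to re-activate. I would therefore isolate this faithfulness statement — that reading $\hd$ and $\tl$ of $c_s$ as the head and tail of $s$ yields the same result as evaluating the inert-constant template — as the one lemma to state and prove with care, since everything else then follows from the development of Theorem~\ref{main}.
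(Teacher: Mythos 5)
Your proposal follows essentially the same route as the paper's (sketched) proof: both introduce fresh inert constants for the stream arguments, use termination, orthogonality, and the data-independence hypothesis to show that the normal form of $\hd(\tl^k(f(u_1,\dots,u_n,c_1,\dots,c_m)))$ is a $\Sigma_d$-context over subterms of the shape $\hd(\tl^i(c_j))$, and then pin down $[f]$ on arbitrary streams by substituting the stream entries $s_j(i)$ into this template --- your $U_{f,\vec u,k}$ is exactly the paper's $T$. Your additional spelling-out of the satisfaction and uniqueness halves simply fills in details the paper leaves implicit in its sketch, so the two arguments coincide in substance.
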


\begin{proof} (sketch)
We have to prove that for any $f \in \Sigma_s$ of type
$d^n \times s^m \to s$ the function $[f] : D^n \times (\str)^m \to \str$ is
uniquely defined. For doing so we introduce $m$ fresh constants $c_1,\ldots,c_m$
of sort $s$. Let $k \in \nat$ and $u_1,\ldots,u_n \in D$. Due to termination and 
orthogonality of  $\Obs(R_s) \cup R_d$, the term
$\hd(\tl^k(f(u_1,\ldots,u_n,c_1,\ldots,c_m)))$ has a unique normal form with
respect to $\Obs(R_s) \cup R_d$. Since it is of sort $d$, due to the shape of
the rules it is a ground term of sort $d$ over $\Sigma_d \cup \{\hd,
\tl,c_1,\ldots,c_m \}$, that is, a ground term $T$ composed from $\Sigma_d$ and
terms of the shape $\hd(\tl^i(c_j))$ for $i \in \nat$ and $j \in
\{1,\ldots,m\}$. For this observation it is essential that left-hand sides 
do not contain non-variable terms of sort $d$: terms of the shape
$f(\hd(\cdots),\cdots)$ should be rewritten. 

Let $N$ be the greatest number $i$ for which $T$ has a subterm
of the shape $\hd(\tl^i(c_j))$.  Let $s_1,\ldots,s_m \in \str$. Define
$t_j = s_j(0) : s_j(1) : \cdots : s_j(N) : \sigma$.
Since the term $\hd(\tl^k(f(u_1,\ldots,u_n,c_1,\ldots,c_m)))$ rewrites to $T$, 
$\hd(\tl^k(f(u_1,\ldots,u_n,t_1,\ldots,t_m)))$ \linebreak
rewrites to $T'$ obtained from
$T$ by replacing every subterm of the shape $\hd(\tl^i(c_j))$ by
$\hd(\tl^i(t_j))$. Observe that $\hd(\tl^i(t_j))$ rewrites to $s_j(i) \in D$. 
So \linebreak $([f](u_1,\ldots,u_n,s_1,\ldots,s_m))(k)$
has to be the $R_d$-normal form of
the ground term over $\Sigma_d$ obtained from $T$ by replacing every subterm of 
the shape $\hd(\tl^i(c_j))$ by $s_j(i) \in D$. Since this fixes 
$([f](u_1,\ldots,u_n,s_1,\ldots,s_m))(k)$ for every $k$, this uniquely defines 
$[f]$. 
\end{proof}

\begin{example}
It is easy to see that for the standard stream functions $\zip$, $\even$ and $\odd$ 
defined by 
\[ 
\even(x:\sigma) = x : \odd(\sigma), \;\;
\odd(x:\sigma) = \even(\sigma), \;\;
\zip(x:\sigma,\tau) = x : \zip(\tau,\sigma), \] 
there exists $f : \str \to \str$ for every data set $D$ satisfying
\[ f(x : \sigma) \; = \; x:\zip(f(\even(\sigma)),f(\odd(\sigma))),\]
namely the identity. By Theorem \ref{main2} we can conclude it is the only one,
since for $R_d = \emptyset$ and $R_s$ consisting of the above four equations, the
resulting TRS $\Obs(R_s)$ consisting of the rules
\[ \begin{array}{rclrcl}
\hd(\even(\sigma)) & \to & \hd(\sigma)  &
\hd(\odd(\sigma)) & \to & \hd(\even(\tl(\sigma))) \\
\tl(\even(\sigma)) & \to & \odd(\tl(\sigma)) \hspace{1cm}&
\tl(\odd(\sigma)) & \to & \tl(\even(\tl(\sigma))) 
\end{array} \]
\[ \begin{array}{rcl}
\hd(f(\sigma)) & \to & \hd(\sigma) \\
\tl(f(\sigma)) & \to & \zip(f(\even(\tl(\sigma))),f(\odd(\tl(\sigma)))) 
\end{array} \]
and the rules for '$:$' and $\zip$ as in Example \ref{tmo},
is terminating as can be proved by AProVE \cite{aprove} or TTT2
\cite{ttt2}. Other approaches seem to fail: the technique from 
\cite{R05} fails to prove that the identity is the only stream function satisfying
the equation for $f$, while productivity of stream specifications containing
the rule for $f$ cannot be proved to be productive by the technique from
\cite{EGH08}. By essentially choosing $\Obs(R_s)$ as the input and adding 
information about special contexts, the tool Circ \cite{LGCR09} is able to 
prove that $f$ is the identity.
\end{example}

\section{More Transformations Preserving Semantics}
\label{secex}

Unfolding the Fibonacci stream specification as given in the introduction yields
the proper stream specification $R_s$ consisting of the equations
\[ \begin{array}{rclrcl}
\fib & = & f(\fib)&
g(0,\sigma) & = & 0:1:f(\sigma) \\
f(x:\sigma) & = & g(x,\sigma) \hspace{2cm} &
g(1,\sigma) & = & 0:f(\sigma). \end{array} \]
However, the TRS $\Obs(R_s)$ is not terminating since it allows the infinite reduction
\[ \tl(\fib) \to \tl(f(\fib)) \to \tl(g(\hd(\fib),\underbrace{\tl(\fib)})) \to
\cdots,\]
so our method fails to prove well-definedness of $\fib$ in a direct way.
In Lemma \ref{lemunf1} and Lemma \ref{lemunf2} we already saw two ways to modify
stream specifications while preserving their semantics. In this section we will
extend these lemmas to more general semantics preserving transformations, in
particular by making use of the equations $\EE$ from the proof of Lemma \ref{lem4}
that hold in every model. As an
example, we will apply such transformations to our original $\fib$ specification.
The observational variant of the resulting stream specification will be terminating,
so proving well-definedness of the transformed $\fib$ specification. But since 
the transformations are semantics preserving, this also proves well-definedness of
the original $\fib$ specification. 

In general we propose the following approach: in case for a stream specification the 
termination tools fail to prove termination of the observational variant, then try
to apply semantics preserving transformations as discussed in Section \ref{unf}
and this section until 
a transformed system has been found for which termination of the observational
variant can be proved. If this succeeds, this not only proves well-definedness of
the transformed specification, but also of the original one. 

In this approach we
have a symbol $\tl$ in several variants of the specification, while in the
construction of observational variant a fresh symbol $\tl$ is required. So
in the observational variant two versions of $\tl$ occur: the original symbol $\tl$ 
and the symbol $\tl$ created by $\Obs$. However, if the observational variant
happens to be terminating after identifying these two versions of $\tl$, then it
is also terminating if they are distinguished, so identifying them will not yield 
wrong results. But it may happen that
termination holds if the two versions of $\tl$ are distinguished, and does not hold
if they are identified. This is the case for Example \ref{tmo}.

Recall that mapping a stream specification $(\Sigma_d,\Sigma_s,R_d,R_s)$
to $(\Sigma_d,\Sigma'_s,R_d,R'_s)$ with $\Sigma_s \subseteq \Sigma'_s$ is said 
to {\em preserve semantics} if 
\begin{enumerate}[$\bullet$]
\item $(\Sigma_d,\Sigma_s,R_d,R_s)$ is well-defined if and only if 
$(\Sigma_d,\Sigma'_s,R_d,R'_s)$ is well-defined, and
\item If $(\Sigma_d,\Sigma_s,R_d,R_s)$ is well-defined with corresponding model 
$(S,[\cdot])$, and $(S',[\cdot]')$ is the model corresponding to 
$(\Sigma_d,\Sigma'_s,R_d,R'_s)$, then $[t] = [t]'$ for
all ground terms of sort $s$ over $\Sigma_d \cup \Sigma_s$.
\end{enumerate}
For well-definedness we required the model to satisfy
$S = \{[t] \mid t \in \TT \}$. For this section we need one more technical
requirement: $S$ should be closed under $\tl$. In order not to change our
definitions, throughout this section we assume the
following extra assumptions to achieve this requirement:
\begin{enumerate}[$\bullet$]
\item the symbol $\tl$ is in $\Sigma_s$, and
\item the corresponding equation $\tl(x:\sigma) = \sigma$ is in $R_s$.
\end{enumerate}
Lemma \ref{lemunf2} states that in keeping the same signature $\Sigma_s$, 
replacing $R_s$ by $R'_s$ preserves semantics as long as convertibility with respect 
to $R_s$ coincides with convertibility with respect to $R'_s$. But as we are
interested in preservation of semantics, this syntactical convertibility requirement
may be weakened to a more semantic version: if $R_s$ and $R'_s$ do not have the 
same convertibility relation, but allow the same models, the same can be concluded.
So now for a set $R_s$ of equations we will introduce a congruence $\sim_{R_s}$
being weaker than $=_{R_s}$, but still preserving semantics.

Recall the set $\EE$ of equations 
\[ \EE = \left\{ \begin{array}{rcl}
\hd(x:\sigma) & = & x \\
\tl(x:\sigma) & = & \sigma \\
\sigma & = & \hd(\sigma) : \tl(\sigma)
\end{array} \right. \]
For a set $R_s$ of equations of sort $s$ we define the relation $\sim_{R_s}$ on
terms over $\Sigma_s \cup \Sigma_d \cup \{\hd\}$ inductively by
\begin{enumerate}[$\bullet$]
\item if $\ell = r$ is in $R_s$ then $\ell \sim_{R_s} r$,
\item $\sim_{R_s}$ is reflexive, symmetric and transitive,
\item if $C$ is a context and $\rho$ is a substitution and $t \sim_{R_s} t'$,
then $C[t \rho] \sim_{R_s} C[t' \rho]$,
\item if $\ell = r$ is in $\EE$ then $\ell \sim_{R_s} r$,
\item if $t,t'$ are terms that may contain a fresh variable $x$ of type $d$, and
$t[x := u] \sim_{R_s} t'[x := u]$ for every $u \in D$, then $t \sim_{R_s} t'$.
\end{enumerate}
Note that $=_{R_s}$ is defined by the first three items, so $\sim_{R_s}$ generalizes
$=_{R_s}$ by the additional last two items.

\begin{lem}
\label{lemunf3}
Let $(\Sigma_d,\Sigma_s,R_d,R_s)$ and $(\Sigma_d,\Sigma_s,R_d,R'_s)$ be stream 
specifications satisfying $\ell \sim_{R'_s} r$ for all $\ell = r$ in $R_s$, and
$\ell \sim_{R_s} r$ for all $\ell = r$ in $R'_s$. 
Then transforming $(\Sigma_d,\Sigma_s,R_d,R_s)$ to
$(\Sigma_d,\Sigma_s,R_d,R'_s)$ preserves semantics.
\end{lem}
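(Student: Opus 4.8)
The plan is to mimic the proof of Lemma~\ref{lemunf2}, but replacing the syntactic equality $=_{R_s}$ by the coarser relation $\sim_{R_s}$. The crucial observation I would establish first is the \emph{soundness} of $\sim_{R_s}$ with respect to every model $(S,([f])_{f\in\Sigma_s})$ satisfying the stream specification: if $t \sim_{R_s} t'$ then $[t\rho] = [t'\rho]$ for every ground substitution $\rho$. Once this is available, the lemma follows exactly as Lemma~\ref{lemunf2} did, namely by showing that a model satisfies $(\Sigma_d,\Sigma_s,R_d,R_s)$ if and only if it satisfies $(\Sigma_d,\Sigma_s,R_d,R'_s)$. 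Indeed, suppose a model satisfies $R_s$. For any equation $\ell = r$ in $R'_s$ we are given $\ell \sim_{R_s} r$, so by soundness $[\ell\rho] = [r\rho]$ for every ground $\rho$, hence the model satisfies $R'_s$; the converse uses $\ell \sim_{R'_s} r$ for $\ell = r$ in $R_s$ symmetrically. Since the two specifications admit exactly the same models, they are well-defined together and with the same interpretations of ground terms, which is precisely semantics preservation.

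So the entire content reduces to proving soundness of $\sim_{R_s}$, which I would do by induction on the inductive definition of $\sim_{R_s}$, checking each of its five closure clauses. The first clause ($\ell = r$ in $R_s$ implies $\ell \sim_{R_s} r$) holds because the model is assumed to satisfy the specification. The reflexivity/symmetry/transitivity clause is immediate since $[\cdot]$ takes values in streams and $=$ on streams is an equivalence. The context-and-substitution clause is where I must use that $[\cdot]$ is compositional in the sense of its inductive definition: from $[t\rho'] = [t'\rho']$ for all ground $\rho'$ one obtains $[C[t\rho]\,\rho''] = [C[t'\rho]\,\rho'']$ by an inner induction on the context $C$, using that each $[f]$, $[:]$, $\nf$ is a well-defined function of the interpretations of its arguments. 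For the fourth clause I must verify that the three equations of $\EE$ are valid in every such model; this is exactly the computation already carried out inside the proof of Lemma~\ref{lem4}, where $[\hd]$ and $[\tl]$ were defined on streams by $[\hd](s) = s(0)$ and $([\tl](s))(i) = s(i+1)$, and where it was checked that $\hd(x:\sigma) = x$, $\tl(x:\sigma) = \sigma$, and $\sigma = \hd(\sigma):\tl(\sigma)$ all hold. Here I rely on the standing assumptions of this section that $\tl \in \Sigma_s$ with its defining equation in $R_s$, so that $S$ is closed under $[\tl]$ and the symbols $\hd,\tl$ genuinely denote functions $S \to S$ (extended to $\str$ if necessary, as in Lemma~\ref{lem4}).

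The fifth clause is the genuinely new ingredient and the step I expect to be the main obstacle, though I anticipate it is more a matter of stating the semantics correctly than of real difficulty. It says that if $t,t'$ may contain a fresh data variable $x$, and $t[x:=u] \sim_{R_s} t'[x:=u]$ for every $u \in D$, then $t \sim_{R_s} t'$. For soundness I must show $[t\rho] = [t'\rho]$ for every ground $\rho$. The point is that $\rho$ assigns to $x$ a \emph{ground data term} whose $R_d$-normal form $\nf(x\rho)$ lies in $D$; writing $u = \nf(x\rho)$, the substitution $\rho$ factors, up to data normalization, as the instance $x:=u$ followed by $\rho$ on the remaining variables, and by the induction hypothesis applied to this particular $u$ we already have $[t[x:=u]\,\rho'] = [t'[x:=u]\,\rho']$ for all ground $\rho'$. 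The care needed is to confirm that interpreting $x$ by the data term $x\rho$ gives the same result as interpreting it by its normal form $u\in D$, which holds because by definition $[\cdot]$ sends every ground data term to its $R_d$-normal form, so data arguments enter all stream operators only through their elements of $D$. Assembling these five cases yields soundness of $\sim_{R_s}$, and the lemma then drops out by the model-equivalence argument above.
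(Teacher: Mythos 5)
Your proposal is correct and follows essentially the same route as the paper: prove by induction on the inductive definition of $\sim_{R_s}$ that every model satisfying the specification validates $t = t'$ whenever $t \sim_{R_s} t'$ (using the validity of $\EE$, which rests on the standing assumption that $\tl(x:\sigma)=\sigma$ is in $R_s$, and the case analysis over $u \in D$ for the last clause), then conclude that both specifications admit exactly the same models. Your treatment is in fact somewhat more careful than the paper's terse version, e.g.\ in factoring a ground substitution through the $R_d$-normal form of $x\rho$ for the fifth clause; the only slip is the remark that $\hd$ denotes a function $S \to S$, whereas $[\hd] : S \to D$, which does not affect the argument.
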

\begin{proof}
In an arbitrary model $(S,[\cdot])$ for a stream specification we define 
$[\hd](s) = s(0)$ for $s \in S \subseteq \str$. By assuming the equation
$\tl(x:\sigma) = \sigma$ we conclude $([\tl](s))(i) = s(i+1)$ for $i \geq 0$.
Combined with the definition of $[:]$ we conclude that $\EE$ holds in every model.

In case an equation $t[x := u] = t'[x := u]$ holds in a model for every $u \in D$, then by
definition the equation $t = t'$ holds in the model, too. 

Combining these observations we conclude by induction on the structure of $\sim_{R_s}$
that if a
model satisfies $R_s$, and $t  \sim_{R_s} t'$, then the model satisfies the equation
$t=t'$ too. Applying this both for $\sim_{R_s}$ and $\sim_{R'_s}$,
and using the conditions of the lemma we conclude that a model 
satisfies $(\Sigma_d,\Sigma_s,R_d,R_s)$ if and only if it 
satisfies $(\Sigma_d,\Sigma_s,R_d,R'_s)$. From this the lemma follows.
\end{proof}

\begin{example}
\label{fibex} 
Our $\fib$ specification completed by the $\tl$ equation reads
\[ \begin{array}{rclrcl}
f(0:\sigma) & = & 0:1:f(\sigma) &
\fib & = & f(\fib) \\ 
f(1:\sigma) & = & 0:f(\sigma) &
\hspace{1cm} \tl(x:\sigma) & = & \sigma. \end{array}\]
By Theorem \ref{thmunf} we know that unfolding this to 
\[ \begin{array}{rclrcl}
f(x:\sigma) & = & g(x,\sigma) &
\fib & = & f(\fib), \\ 
g(0,\sigma) & = & 0:1:f(\sigma) &
\hspace{1cm} \tl(x:\sigma) & = & \sigma  \\
g(1,\sigma) & = & 0:f(\sigma) & & & 
\end{array}\]
preserves semantics. Moreover, by Lemma \ref{lemunf1} we may add a constant $c$ to
the signature and add the equation $c = \tl(\fib)$, still preserving semantics. Let 
$R_s$ consist of these equations, and let $R'_s$ consist of
\[ \begin{array}{rclrcl}
f(x:\sigma) & = & g(x,\sigma) &
\fib & = & 0:c \\ 
g(0,\sigma) & = & 0:1:f(\sigma) &
c & = & 1:f(c) \\ 
g(1,\sigma) & = & 0:f(\sigma) &
\hspace{1cm} \tl(x:\sigma) & = & \sigma. \end{array}\]
Now we will check the conditions of Lemma \ref{lemunf3}.

For proving that $\ell \sim_{R'_s} r$ for all $\ell = r$ in $R_s$ we only need to 
consider the equations $c = \tl(\fib)$ and $\fib = f(\fib)$. We obtain
\[ c \; =_{R'_s} \; \tl(0:c) \;  =_{R'_s} \; \tl(\fib)\]
and
\[ \fib \; =_{R'_s} \; 0:c \; =_{R'_s} \; 0:1:f(c) \; =_{R'_s} \; g(0,c) \;  =_{R'_s} 
\; f(0:c) \; =_{R'_s} \; f(\fib).\]

For proving $\ell \sim_{R_s} r$ for all $\ell = r$ in $R'_s$ we only need to
consider the equations $\fib = 0:c$ and $c = 1:f(c)$. For this we need the congruence
$\sim_{R_s}$ rather than $=_{R_s}$. First observe
\[ \hd(g(0,\sigma)) \sim_{R_s} \hd(0:1:f(\sigma)) \sim_{R_s} 0, \]
and
\[ \hd(g(1,\sigma)) \sim_{R_s} \hd(0:f(\sigma)) \sim_{R_s} 0, \]
so by the last item of the definition of $\sim_{R_s}$ we obtain 
$\hd(g(x,\sigma)) \sim_{R_s} 0$.  Using this we get
\[ \begin{array}{rcl}
\fib & \sim_{R_s} & \hd(\fib) : \tl(\fib) \\
 & \sim_{R_s} & \hd(\fib) : c \\
 & \sim_{R_s} & \hd(f(\fib)) : c \\
 & \sim_{R_s} & \hd(f(\hd(\fib):\tl(\fib))) : c \\
 & \sim_{R_s} & \hd(g(\hd(\fib),\tl(\fib))) : c \\
 & \sim_{R_s} & 0 : c. \end{array} \]
Using $\fib \sim_{R_s} 0:c$, for the remaining equation we have
\[ \begin{array}{rcl}
c & \sim_{R_s} & \tl(\fib) \\
 & \sim_{R_s} & \tl(f(\fib)) \\
 & \sim_{R_s} & \tl(f(0:c)) \\
 & \sim_{R_s} & \tl(g(0,c)) \\
 & \sim_{R_s} & \tl(0:1:f(c)) \\
 & \sim_{R_s} & 1:f(c). \end{array} \]
So the requirements of Lemma \ref{lemunf3} are fulfilled and we conclude that
transforming $R_s$ to  $R'_s$ preserves semantics. By tools like 
AProVE or TTT2 one proves that $\Obs(R'_s)$ is terminating, so by Theorem
\ref{main} $R'_s$ is well-defined. Due to preservation of semantics the same holds
for $R_s$, and for the original $\fib$ specification.
\end{example}

As a consequence, we can conclude incompleteness of Theorem \ref{main}: the stream 
specification $R_s$ is well-defined but $\Obs(R_s)$ is not terminating, due to the
infinite reduction of $\Obs(R_s)$ we saw before.

The argument for the $\fib$ example can be given in a more sloppy way as was done
in \cite{Z09} as follows.
Identify ground terms with their interpretations in a model. The result of
$g$ always starts by $0$, so we can write $\fib = f(\fib) = g(\cdots) = 0:c$
for some stream $c$. Using this equality $\fib = 0:c$ we obtain
\[ 0:c = \fib = f(\fib) = f(0:c) = 0:1:f(c),\]
so $c = 1:f(c)$. So any model for the original specification also satisfies $R'_s$ 
which is obtained by replacing the equation $\fib = f(\fib)$ by the two equations 
$\fib = 0:c$ and $c = 1:f(c)$.  As $R'_s$ satisfies our format and $\Obs(R'_s)$ is 
terminating we conclude well-definedness of $\fib$.

For justifying the steps $f(\fib) = g(\cdots) = 0:c$ in this argument 
we need the last two items of the definition of $\sim_{R_s}$: 
\begin{enumerate}[$\bullet$]
\item for the step $f(\fib) = g(\cdots)$ we need $\EE$ to
rewrite $\fib$ to a term with ":" on top, and 
\item for the step $g(\cdots) = 0:c$ we
need the case analysis on the data element in "$\cdots$" as expressed by the last
item in the definition of $\sim_{R_s}$, 
\end{enumerate}
exactly as we did in our detailed proof.
Note that the sloppy argument only shows that the new equations in $R'_s$
are implied by original equations, and not the other way around. The following
example shows that it is essential also to prove the other direction.

\begin{example}
Consider the proper stream specification $R_s$ consisting of
\[ \begin{array}{rclrcl}
f(x:\sigma,y:\tau) & = & g(x,y) & \zeros & = & 0 : \zeros \\
g(0,0) & = & \ones & \ones & = &  1:\ones \\
g(0,1) & = & \zeros & c & = & f(c,c) \\
g(1,x) & = & \zeros & \hspace{1cm} \tl(x:\sigma) & = & \sigma \end{array} \]
If $[c]$ starts with 0, then $[f(c,c)] = [g(0,0)] = [\ones]$ starts with 1, and
if $[c]$ starts with 1, then $[f(c,c)] = [g(1,1)] = [\zeros]$ starts with 0, so $R_s$
does not admit a model and is not well-defined. However, the proper stream
specification $R'_s$ obtained from $R_s$ by replacing $c = f(c,c)$ by
$c = f(f(c,c),c)$ is well-defined, while this new equation satisfies
$c =_{R_s} f(f(c,c),c)$. Well-definedness of $R'_s$ can be proved by proving
termination of $\Obs(R''_s)$, where $R''_s$ is obtained from $R'_s$ by replacing the
equation for $c$ by $c = \zeros$. The transformation from $R'_s$ to $R''_s$ satisfies the
requirements of Lemma \ref{lemunf3}; for checking this one shows that
\[ f(f(0:\sigma,0:\sigma),0:\sigma) \sim_{R'_s} f(g(0,0),0:\sigma) \sim_{R'_s} 
f(1:\ones,0:\sigma)  \sim_{R'_s} g(1,0)  \sim_{R'_s} \zeros \]
and 
\[ f(f(1:\sigma,1:\sigma),1:\sigma) \sim_{R'_s} f(g(1,1),1:\sigma) \sim_{R'_s} 
f(0:\zeros,1:\sigma)  \sim_{R'_s} g(0,1)  \sim_{R'_s} \zeros \]
by which from the last item of the definition of $\sim_{R'_s}$ one concludes
\[f(f(x:\sigma,x:\sigma),x:\sigma) \sim_{R'_s} \zeros,\]
hence 
\[ c \sim_{R'_s} f(f(c,c),c) \sim_{R'_s}
  f(f(\hd(c) : \tl(c),\hd(c) : \tl(c)),\hd(c) : \tl(c)) \sim_{R'_s} 
\zeros.\]
\end{example}

Next we show how we can use the combination of Lemmas
\ref{lemunf1} and \ref{lemunf3} and Theorem \ref{main} to prove that the following 
stream specification admits exactly {\em two} models $(S,[\cdot])$ with 
$S = \{[t] \mid t \in \TT \}$:
\[ \begin{array}{rclrcl}
f(0:\sigma) & = & 0:1:f(\sigma) &
m & = & f(m) \\ 
f(1:\sigma) & = & 1:0:f(\sigma) &
\hspace{1cm} \tl(x:\sigma) & = & \sigma. \end{array}\]
Assume we have a model $(S,[\cdot])$ of this specification. Then either
$[m](0) = 0$ or  $[m](0) = 1$. In the former case the equation $m = 0:\tl(m)$
holds, in the latter case the equation $m = 1:\tl(m)$ holds. First assume we are
in the former case. Then we may add the equation $m = 0:\tl(m)$. For this extended
system we will prove well-definedness. Note that the specification is 
not orthogonal, but for applying Lemmas \ref{lemunf1} and \ref{lemunf3} and Theorem
\ref{thmunf} this is not required. After applying Lemma \ref{lemunf1} and Theorem
\ref{thmunf} we arrive at the (non-proper) specification $R_s$ consisting of 
\[ \begin{array}{rclrcl}
f(x:\sigma) & = & g(x,\sigma) & m & = & f(m) \\ 
g(0,\sigma) & = & 0:1:f(\sigma) & m & = & 0 : \tl(m) \\ 
g(1,\sigma) & = & 1:0:f(\sigma) & c & = & \tl(m) \\ 
& & & \hspace{1cm} \tl(x:\sigma) & = & \sigma. \end{array}\]
Now we transform this to the proper specification $R'_s$ consisting of 
\[ \begin{array}{rclrcl}
f(x:\sigma) & = & g(x,\sigma) & m & = & 0 : c \\ 
g(0,\sigma) & = & 0:1:f(\sigma) & c & = & 1 : f(c) \\ 
g(1,\sigma) & = & 1:0:f(\sigma) & \hspace{1cm} \tl(x:\sigma) & = & \sigma. \end{array}\]
One easily checks that $\ell =_{R'_s} r$ for all equations $\ell = r$ in $R_s$ and 
conversely, so by Lemma \ref{lemunf3} (or even Lemma \ref{lemunf2}) one concludes 
that this transformation is semantics preserving. Since $\Obs(R'_s)$ is easily
checked to be terminating, this shows that adding $m = 0:\tl(m)$ to the original
specification yields exactly one model with $S = \{[t] \mid t \in \TT \}$.
By symmetry the same holds for the other case, where the equation $m = 1:\tl(m)$
is added. Without a proof we mention that the two solutions for $m$ are exactly 
the Thue-Morse stream $\ms$ from Example \ref{tm} and its inverse.

We conclude this section by an elaboration of the Kolakoski stream.
\begin{example}
\label{kol}
The {\em Kolakoski stream} $\kol$ is the unique fix point of $g$ defined by
\[ \begin{array}{rcl}
g(0 : \sigma) & = & 1 : 1 : f (\sigma) \\
g(1 : \sigma) & = & 1 : f (\sigma) \\
f(0 : \sigma) & = & 0 : 0 : g(\sigma) \\
f(1 : \sigma) & = & 0 : g(\sigma).
\end{array} \]
So both for $f$ and $g$ its result on a stream is defined as follows.
If a 1 is read, then a single symbol is produced, and if a 0 read, then two
copies of a symbol are produced. This producing is done in such a way that
the produced elements are alternately 0's and 1's, for $f$ starting with 0
and for $g$ starting with 1. Due to this procedure in some presentations 
instead of 0 the number 2 is written. 

Of course we have to prove that $g$ has a unique fix point $\kol$. 
Similar to what we saw for $\fib$, the fix point equation $\kol = g(\kol)$ 
causes non-termination in the observational variant so we cannot apply our
approach directly. In order to prove well-definedness, we follow
the same lines as we did for $\fib$, with the difference that now we do not start
by unfolding, but postpone unfolding to the end. 
Start by the four equations for
$f$ and $g$, and the equations $\kol = g(\kol)$ and $\tl(x:\sigma) = \sigma$.
According to Lemma \ref{lemunf1} addition of the equation $K = \tl(\tl(\kol))$
is semantics preserving. So let $R_s$ consist of all of these equations for 
$f,g,\kol, \tl,K$. We will transform this to $R'_s$ consisting of the equations
for $f,g,\tl$, and the two equations 
\[ \kol = 1:0:K,\;\; K = 0:g(K).\]
Applying unfolding (Theorem \ref{thmunf}) to $R'_s$ yields a proper stream
specification for which TTT2 and AProVE succeed in proving termination of the
observational variant, so by Lemma \ref{lemunf3} it remains to show that
$\ell \sim_{R'_s} r$ for all $\ell = r$ in $R_s$, and
$\ell \sim_{R_s} r$ for all $\ell = r$ in $R'_s$. For doing so, first we show that 
$\hd(g(0:\sigma)) \sim_{R_s} 1$ and $\hd(g(1:\sigma)) \sim_{R_s} 1$, so
$\hd(g(x:\sigma)) \sim_{R_s} 1$, and hence $\hd(g(\sigma)) \sim_{R_s} 
\hd(g(\hd(\sigma):\tl(\sigma))) \sim_{R_s} 1$. Similarly we obtain 
$\hd(f(\sigma)) \sim_{R_s} 0$. Using this we derive 
\[\kol \sim_{R_s} \hd(\kol):\tl(\kol) \sim_{R_s}  \hd(g(\kol)):\tl(\kol) \sim_{R_s} 
 1:\tl(\kol),\]
and 
\[\hd(\tl(\kol)) \sim_{R_s} \hd(\tl(g(\kol))) \sim_{R_s} 
\hd(\tl(g(1:\tl(\kol)))) \sim_{R_s} \]
\[ \hd(\tl(1:f(\tl(\kol))))  \sim_{R_s} \hd(f(\tl(\kol))) \sim_{R_s} 0\]
from which $\kol \sim_{R_s} 1:0:K$ follows. Moreover, we obtain
\[ \begin{array}{rcl} K & \sim_{R_s} & \tl(\tl(\kol)) \\
 & \sim_{R_s} & \tl(\tl(g(\kol))) \\
 & \sim_{R_s} & \tl(\tl(g(1:0:K))) \\
 & \sim_{R_s} & \tl(\tl(1:f(0:K))) \\
 & \sim_{R_s} & \tl(f(0:K)) \\
 & \sim_{R_s} & \tl(0:0:g(K)) \\
 & \sim_{R_s} & 0:g(K). \end{array} \]
For the other direction we have
\[ g(\kol) \sim_{R'_s} g(1:0:K) \sim_{R'_s} 1:f(0:K) \sim_{R'_s} 1:0:0:g(K)
\sim_{R'_s} 1:0:K \sim_{R'_s} \kol \]
and $K \sim_{R'_s} \tl(\tl(1:0:K)) \sim_{R'_s} \tl(\tl(\kol))$,
concluding the proof.

Although this stream $\kol$ has a very simple and regular definition, the stream 
seems to behave remarkably irregular. In contrast to earlier streams
we saw, turtle visualizations of $\kol$ show up hardly any regular pattern: they 
seem to behave just like randomly generated boolean streams. For instance, by
choosing the angle to be 90 degrees both for 0 and 1,
taking the first 50000 elements of $\kol$ yields the following turtle visualization:
\includegraphics[height=2.5in,width=5.5in,angle=0]{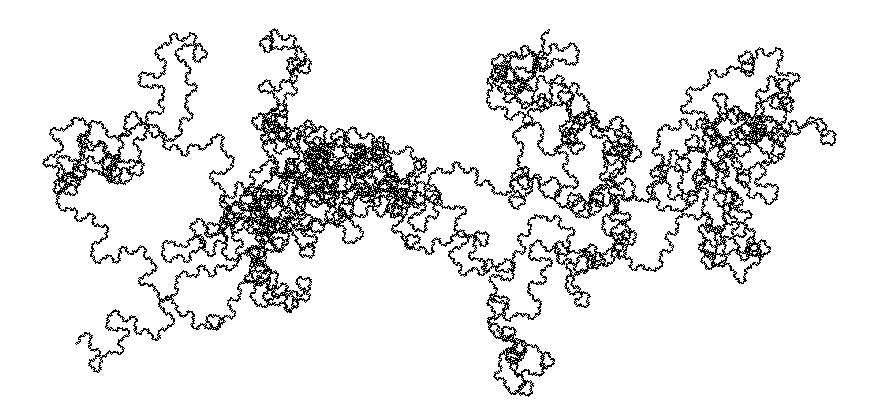}
\end{example}

\section{Conclusions and Further Research}
\label{secconcl}
We presented a technique by which well-definedness of stream specifications like
Example \ref{exgr} can be proved fully automatically, where a tool like Circ
\cite{LR07,LGCR09} fails, and
the productivity tool \cite{EGH08} fails to prove productivity. The main 
idea is to prove well-definedness by proving termination of a transformed system
$\Obs(R_s)$, in this way exploiting the power of present termination provers.

We observed that productivity of the stream specification cannot be concluded from
termination of $\Obs(R_s)$. Intuitively, productivity is closely related
to termination; we leave as a challenge to further relate termination with
productivity of stream specifications. A first step in this direction was
made in \cite{ZR09}. There it was proved that productivity of a stream
specification is equivalent to {\em balanced outermost termination} of
the specification extended by an extra rule $x : \sigma \to \ovf$. Here 
an infinite reduction is called balanced outermost if only outermost
redexes are reduced, and in the choice of them some fairness condition
holds. As
there are powerful techniques to prove outermost termination automatically 
\cite{EH09}, this can be used to prove productivity fully automatically. 
Unfortunately, as soon as binary operations like $\zip$ come in, typically
the notions outermost termination and balanced outermost
termination do not coincide: for many productive stream specifications the 
extension by the overflow rule is not outermost terminating, by which this
approach fails. Instead in \cite{ZR10} some basic criteria for productivity 
have been investigated together with relationship with context-sensitive
termination. Combined with a number of transformations and corresponding
heuristics, this yields a powerful technique for proving productivity 
automatically, supported by an implementation. This approach exploits the 
power of present termination provers for proving productivity, just like we do 
in this paper for proving well-definedness.

We offer an implementation for computing $\Obs(R_s)$ automatically, by which proving
well-definedness can be done fully automatically in case the approach applies 
directly. For cases for which the approach does not apply directly, 
in Section \ref{unf} and Section \ref{secex} we developed techniques to transform
stream specifications in such a way that semantics and well-definedness is
preserved, and often our approach applies to the transformed specifications.
Among these techniques only unfolding (Theorem \ref{thmunf}) is supported by
our implementation. For the other techniques some heuristics will be required. 
For the Fibonacci stream (Example \ref{fibex}) and the Kolakoski stream
(Example \ref{kol}) the following heuristics turned out to be successful:
\begin{enumerate}[$\bullet$]
\item Identify a non-productive constant $c$. In both mentioned examples this
is the stream to be defined, for which the equation is of the shape $c = f(c)$.
\item Determine the first element $d$ of the stream represented by $c$.
\item Introduce a fresh constant $c'$, and introduce the equation $c = d: c'$.
\item Using both the original equations and this new equation $c = d: c'$ try
to find a sound equation $c' = t$ in which $t$ is a term containing $c'$, but
not $c$.
\item Replace the original equation $c= \cdots$ by the two new equations 
$c = d: c'$ and $c' = t$, and check whether this transformation is semantics
preserving.
\item In case this approach fails, try the generalization in which the first
$n$ elements $d_1,\ldots,d_n$ of the stream represented by $c$ are determined
for some small value $n$, and the equation $c = d_1 : d_2 : \cdots : d_n : c'$
is introduced for a fresh constant $c'$.
\end{enumerate}

Another approach of using the techniques of Section \ref{unf} and Section \ref{secex}
is proving well-definedness of a stream specification by proving productivity
of all ground terms in a transformed specification, e.g., by the approach of 
\cite{ZR10}. Since 
productivity implies well-definedness and the transformation preserves semantics, this
implies well-definedness of the original specification. 

In Section \ref{secex} we used the technical assumption that the model is closed 
under $\tl$. This was forced by assuming the equation $\tl(x:\sigma) = \sigma$.
We conjecture that for the validity of the approach this is not essential. More precisely,
we conjecture that a stream specification $(\Sigma_d,\Sigma_s,R_d,R_s)$ with
$\tl \not\in \Sigma_s$ is well-defined if and only if the extended specification
$(\Sigma_d,\Sigma_s \cup \{\tl\},R_d,R_s \cup \{\tl(x:\sigma) = \sigma\})$ 
is well-defined. This looks trivial as $\tl$ does not occur in the original
specification, so is not expected to influence anything. However, giving a 
formal proof causes problems. The reason is that the 
model for $(\Sigma_d,\Sigma_s,R_d,R_s)$ may not be closed under $[\tl]$. In fact we can 
even prove that in the model $(S,[\cdot])$ for the $\fib$  example satisfying 
$S = \{[t] \mid t \in \TT \}$, the tail of $\fib$ is {\em not} contained in $S$.
A problem is how to lift $[f]$ defined on $S$ to the larger model that is
closed under $\tl$. For the particular $\fib$ example a solution can be
given, but for the general setting we failed.

This paper purely focuses on streams over a fixed data set $D$; in all examples 
even $D$ consists of the booleans. It is expected that the approach can be
generalized to other infinite data types like infinite binary trees. A
suitable format for this more general kind of infinite data structures has
been given in \cite{ZR10}. In such a setting destructors can be defined as
inverses of the constructors, just like in this paper we introduced
$(\hd,\tl)$ as the inverse of '$:$'. Similar to what we did in this paper for
streams, in this more general setting  a specification 
consisting of equations on terms over constructors and user defined symbols will
be transformed to an observational variant, being a rewrite system over destructors
and the user defined symbols. Just like we did in this paper for the special 
case of streams, this rewrite system serves for observing data. It is
orthogonal by construction, and well-definedness can be concluded from 
termination. Although the agenda for this approach for other infinite data 
structures is similar to what we did in this paper, this has not been
elaborated in detail.

\bibliography{ref}

\end{document}